\documentclass[10pt]{article}
\usepackage[utf8]{inputenc}
\usepackage[margin=1in]{geometry}
\usepackage[normalem]{ulem}
\usepackage{amsthm}
\usepackage{bm}
\usepackage{mathtools}
\usepackage{thmtools}
\declaretheoremstyle[headfont=\normalfont]{normalhead}
\usepackage{color}
\usepackage{graphicx}
\usepackage{morefloats}
\usepackage{hyperref}
\usepackage{ amssymb }
\usepackage{textcomp}
\usepackage{url}
\usepackage{float}
\usepackage{appendix}
\usepackage{authblk}
\usepackage{caption}
\usepackage[
backend=biber,
giveninits=true,
uniquename=false,
date=year,
style=numeric-comp,
uniquelist=false,
sorting=none,
url=false,
isbn=false,
doi=false,
maxbibnames=10,
minbibnames=1,
maxcitenames=3,
mincitenames=1
]{biblatex}
\DeclareSourcemap{
  \maps[datatype=bibtex]{
    \map{
      \pertype{article}
      \step[fieldsource=journaltitle, match=\regexp{.+}]
      \step[fieldset=eprint,       null]
      \step[fieldset=eprinttype,   null]
      \step[fieldset=eprintclass,  null]
    }
    \map{
      \pertype{inproceedings}
      \step[fieldsource=booktitle, match=\regexp{.+}]
      \step[fieldset=eprint,       null]
      \step[fieldset=eprinttype,   null]
      \step[fieldset=eprintclass,  null]
    }
    \map{
      \pertype{conference}
      \step[fieldsource=booktitle, match=\regexp{.+}]
      \step[fieldset=eprint,       null]
      \step[fieldset=eprinttype,   null]
      \step[fieldset=eprintclass,  null]
    }
  }
}

\newcommand\R{\mathbb{R}}

\newcommand{\calP}{\mathcal{P}}

\newcommand{\ra}{\rightarrow}

\renewcommand\epsilon{\varepsilon}

\DeclareMathOperator*{\argmin}{arg\,min}

\newtheoremstyle{mydef}
{\topsep}{\topsep}%
{}{}%
{\itshape}{}
{\newline}
{%
  \rule{\textwidth}{0.0pt}\\*%
  \thmname{#1}~\thmnumber{#2}\thmnote{\-\ #3}.\\*[-1.5ex]%
  \rule{\textwidth}{0.0pt}}%



\begin{document}

\newtheorem{conjecture}{Conjecture}
\newtheorem{proposition}{Proposition}
\newtheorem{theorem}{Theorem}[section]
\newtheorem{definition}{Definition}[section]
\newtheorem{question}{Question}
\newtheorem{remark}{Remark}
\newtheorem{proposal}{Proposal}
\newtheorem{lemma}{Lemma}[section]
\newtheorem{corollary}{Corollary}[section]
\newtheorem{observation}{Observation}[section]
\newtheorem{assumption}{Assumption}[section]

\renewcommand\appendixpagename{Supplementary materials}

\date{\today}

\title{Biophysics-informed deep operator learning for inverse problems with application to electrophysiological source reconstruction}

\author[1]{Eardi Lila}
\author[2,3]{Erica R. Peterson}
\author[2]{Alexis N. Bosseler}
\author[4,5]{J. Nathan Kutz}
\author[2,6]{Samu Taulu}

\affil[1]{Department of Biostatistics, University of Washington, Seattle, WA, USA}
\affil[2]{Institute for Learning \& Brain Sciences, University of Washington, Seattle, WA, USA}
\affil[3]{Department of Speech \& Hearing Sciences, University of Washington, Seattle, WA, USA}

\affil[4]{Department of Applied Mathematics, University of Washington, Seattle, WA, USA}
\affil[5]{Department of Electrical and Computer Engineering, University of Washington, Seattle, WA, USA}
\affil[6]{Department of Physics, University of Washington, Seattle, WA, USA}

\maketitle
\begin{abstract}
Electrophysiological brain signals are typically acquired through indirect and noisy measurements, providing transformed representations of the underlying neural activity. Source reconstruction---the inverse problem of resolving underlying neural signals from these measurements---is essential for mapping brain function but remains challenging because it is ill-posed and sensitive to noise. Deep learning methods have shown promise across a range of inverse problems, yet many do not explicitly incorporate the biophysical principles governing data generation, limiting data efficiency and adaptation across subjects. Here, we introduce DeepOp-Informed, a biophysics-informed geometric deep operator learning framework that embeds the biophysics of the sensing process into the model through a custom differentiable layer, enabling more efficient learning and improved reconstruction performance. This layer enables the neural network to adapt to subject-specific variations in the physics of signal generation, resulting from differences in brain anatomy and sensor positioning. In realistic magnetoencephalography simulations, DeepOp-Informed generalizes to forward models from held-out subjects, reducing reconstruction error relative to several neural-network and classical baselines. Applied to adolescent auditory-evoked recordings, it produces anatomically plausible reconstructions localized to the auditory cortex. While our application focuses on magnetoencephalography, the framework is general and may be adaptable to other imaging modalities.

\end{abstract}


\noindent

\noindent

\section{Introduction}
Source reconstruction refers to the task of mapping electrophysiological measurements onto the brain to infer the spatial distribution of neuronal activity. Magnetoencephalography (MEG) is one of the primary imaging modalities for measuring electrophysiological signals, offering high temporal resolution and providing recordings of the magnetic field generated by neuronal currents. \parencite{baillet2001electromagnetic, baillet2017magnetoencephalography}. However, source reconstruction is a highly ill-posed inverse problem, meaning that small perturbations in the sensor-level data can lead to large deviations in the reconstructed source signal, making the process highly susceptible to noise. Standard approaches tackle this problem by first constructing a \textit{forward model} that relates neural currents to the externally measured magnetic field under the quasi-static approximation to Maxwell’s equations \parencites{sarvas1987basic,hamalainen1993magnetoencephalography,taulu2021unified}, incorporating the physics of the sensing mechanism, head anatomy, and sensor geometry. This step is followed by a regularized least-squares reconstruction of the latent source signal \parencites[see, e.g., ][]{dale1993improved, hamalainen1994interpreting, mosher1998recursive, pascual-marqui2002standardized, vanveen1997localization, dale2000dynamic}. However, these approaches may lack spatial accuracy, particularly in low signal-to-noise (SNR) settings, due to their overly simplistic modeling assumptions \parencite{becker2015brainsource}. Specifically, most of these methods are linear in the data, meaning that the reconstructed source at each location is restricted to be a linear combination of the sensor-level measurements. Although nonlinear extensions do exist, specifying and tuning the prior information in such models remains challenging.

Neural networks have recently emerged as a promising alternative to these traditional methods in both MEG and EEG (electroencephalography) \parencites{jiao2022graph, jiao2024multimodal, wang2024fast, sun2022deep}, as they can represent more complex source-sensor relationships. These approaches generate synthetic source data, apply a forward model to obtain corresponding M/EEG sensor measurements, and use neural networks to learn an approximation of the inverse mapping. The equations governing the relationship between neural currents and the measured magnetic field are indirectly learned by the neural networks from the provided source-sensor data pairs. However, learning these mechanistic aspects implicitly is typically data-inefficient and can produce unstable reconstruction models, as demonstrated by our simulation studies and the broader inverse-problems literature \parencite{antun2020instabilities}.

In this work, we introduce DeepOp-Informed, a biophysics-informed architecture that incorporates external mechanistic information by conditioning on each sample’s forward model through a differentiable regularized inverse layer (Figure~\ref{fig:intro}). DeepOp-Informed is designed to learn a reconstruction operator, $\mathcal R_{\mathcal{K},\theta}: \mathbb{R}^s \rightarrow \mathbb{R}^p$, that maps the sensor space $\mathbb{R}^s$, with $s$ sensors, to the source space $\mathbb{R}^p$, with $p$ dense cortical dipoles. The reconstruction operator depends both on the learnable neural-network parameters $\theta$ and on a user-provided forward model $\mathcal{K} \in \mathbb{R}^{s \times p}$, which encodes the biophysical mapping from sources to observed signals; for this reason, the approach is termed biophysics-informed. We use the term \textit{operator} because the formulation naturally generalizes beyond the MEG setting to mappings between more general spaces, linking the proposed framework to the deep-operator-learning literature \parencites{lu2021learning, kovachki2024operator, reinhardt2024statistical}. 

Specifically, DeepOp-Informed consists of two modules: a differentiable biophysics-informed custom layer and a geometric deep learning refinement module. The biophysics-informed layer produces an initial reconstruction by combining biophysical knowledge, encoded through a provided forward model $\mathcal{K}$, with a learned regularization term. The specific form of this combination draws inspiration from empirical Bayes methods \parencites{efron2009empirical, saremi2019neural}. The geometric deep learning module then refines this initial solution by leveraging a graph-based representation that encodes cortical spatial structure. All learnable components of the two modules are trained end-to-end on a curated synthetic dataset, allowing the model to be robust to non-idealities, such as realistic noise, and to incorporate prior knowledge about the nature of signals to be reconstructed.

In contrast to prior neural-network-based approaches to the MEG inverse problem \parencites{jiao2022graph, jiao2024multimodal, wang2024fast, sun2022deep}, DeepOp-Informed conditions the network on the forward model $\mathcal{K}$, explicitly embedding the physics of the sensing process into the architecture. This biophysics-informed architecture has two main advantages. First, it enables the model to directly incorporate mechanistic information from the forward model, improving data efficiency. Second, once the parameters $\theta$ have been learned, the same network can be deployed across subjects by providing a subject-specific forward model $\mathcal{K}$, without retraining. Classical source reconstruction methods for electrophysiological source imaging, such as minimum-norm estimate \parencite{hamalainen1994interpreting}, also implicitly define a reconstruction operator $\mathcal{R}_{\mathcal{K},\theta}$ through the solution of a least-squares problem. However, most of these methods are linear and differ mainly in how the source covariances are specified and estimated \parencite{mosher2003equivalence}. As a result, their reconstruction performance may be inherently limited, and they cannot take advantage of the expressive power of neural networks to model nonlinear structure in the data. Neural-network approaches to image reconstruction have also been developed for magnetic resonance imaging and computed tomography \parencite{argyrou2012tomographic, oh2018eternet, schlemper2018deepa, zhu2018image}, including deep unrolled architectures that explicitly incorporate information from the ``forward model'' \parencite{stuart2010inverse, arridge2019solving, aggarwal2019modl}. However, these approaches are typically designed for settings with higher SNRs and are generally not evaluated in settings where the forward model is subject-specific, as in our case, or where the source space is a nonlinear surface.

\begin{figure}[!htb]%
\centering
\includegraphics[width = 0.95\textwidth]{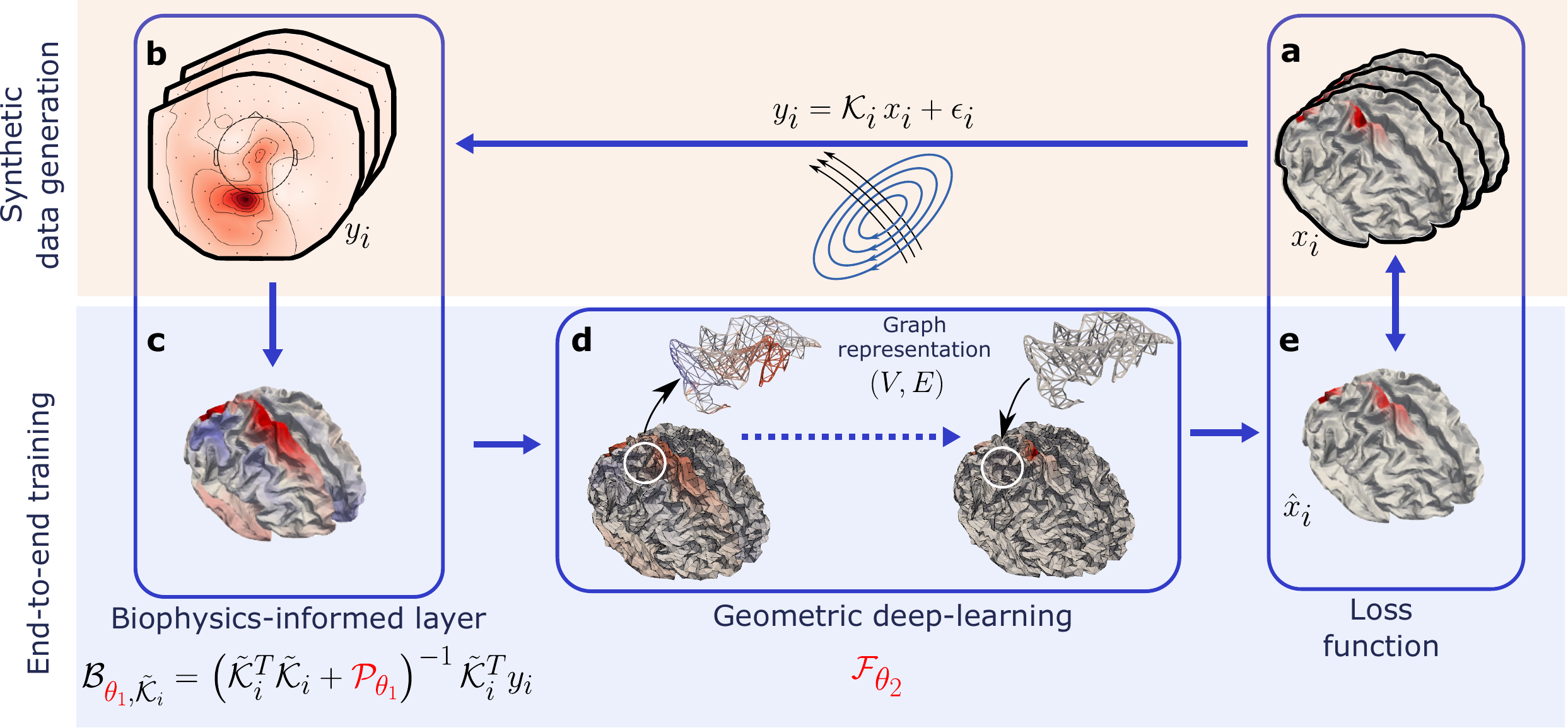}
\caption{\textit{Illustration of DeepOp-Informed in the context of MEG reconstruction.} \textbf{a.} We use a reference cortical surface represented as a triangular mesh whose vertices define the locations of $p$ dipoles oriented normal to the surface. We assume all subjects’ cortical surfaces have been registered to this reference cortical surface. We then use the reference cortical surface mesh to generate synthetic source signals $x_i \in \R^p$ by generating a random number of equal-intensity point sources centered at randomly selected vertices on the cortical surface, followed by smoothing to produce spatially localized but more diffuse peaks. \textbf{b.} Given the positions of the $s$ extracranial sensors and a head model, we construct the corresponding subject-specific forward models, $\mathcal{K}_i \in \mathbb{R}^{s \times p}$. We then map the synthetic source signals to sensor space using the subject-specific forward models and add realistic noise to obtain $y_i \in \R^s$. DeepOp-Informed is then trained on the synthetic dataset of triplets $(y_i, x_i, \tilde{\mathcal K}_i)$ to learn to reconstruct $x_i$ from $y_i$ and $\tilde{\mathcal K}_i$. Here, $\tilde{\mathcal K}_i$ is a surrogate forward model that may be taken to be equal to $\mathcal K_i$, or it may be used to coarsen the reconstruction or to downweight regions deemed a priori less likely to activate. \textbf{c.} The biophysics-informed input layer of DeepOp-Informed, $\mathcal B_{\theta_1,\tilde{\mathcal K}_i}:\R^s \rightarrow \R^p$ explicitly combines mechanistic knowledge, encoded in the surrogate forward model $\tilde{\mathcal K}_i$, with a learned component $\mathcal P_{\theta_1} \in \R^{p \times p}$. The integration of the learned component with the forward model is inspired by the structure of classical linear reconstruction methods, or equivalently by the form of $\mathbb{E}\!\left[x_i \mid y_i, \tilde{\mathcal K}_i \right]$ under joint Gaussian assumptions on $x_i$ and $\epsilon_i$, but with a learned source covariance structure parameterized by its inverse $\mathcal P_{\theta_1}$ (see Section~\ref{sec:methods}). \textbf{d.} The output of this input layer, a $p$-dimensional signal defined on the cortical mesh, is then passed to a graph convolutional network module $\mathcal F_{\theta_2}:\R^p \rightarrow \R^p$, which further refines the estimate by leveraging the geometry of the cortical mesh, represented as a graph $(V,E)$ constructed from the mesh vertices and triangle edges. \textbf{e.} All data-driven components of the model, shown in red, are learned jointly by minimizing the mean squared error $\frac{1}{n}\sum_{i=1}^n \| x_i - \mathcal R_{\theta,\tilde{\mathcal K}_i}(y_i) \|_2^2$ over $\theta = (\theta_1, \theta_2)$, where $\mathcal R_{\theta,\tilde{\mathcal K}_i} = \mathcal F_{\theta_2}\circ \mathcal B_{\theta_1,\tilde{\mathcal K}_i}$ denotes the reconstruction operator implemented by DeepOp-Informed. We show that DeepOp-Informed provides improved source estimates compared with both physics-agnostic and classical reconstruction methods.}
\label{fig:intro}
\end{figure}%

DeepOp-Informed is also related to traditional physics-informed approaches \parencite[see, e.g.,][]{azzimonti2015blood, menicali2026physicsinformed, kashinath2021physicsinformed, raissi2019physicsinformed, subramanian2022adaptive, zhang2023filtered}. However, such methods typically encode physical information by augmenting the loss function with physics-derived penalties, rather than incorporating the biophysical process underlying the sensing mechanism. Moreover, unlike current deep learning approaches to operator learning \parencite{goswami2023physicsinformed, kovachki2024neural, lu2021learning}, DeepOp-Informed can accommodate sample-specific variations in the physics of signal generation.

We demonstrated the effectiveness of DeepOp-Informed for MEG source reconstruction from a single time snapshot in both realistic simulation scenarios and MEG recordings from a cohort of adolescents. On realistic simulated data, where ground truth is known (Figure~\ref{fig:simulation}), DeepOp-Informed achieved a threefold reduction in reconstruction error relative to all other methods considered. Moreover, the application to MEG data from adolescents (Figure~\ref{fig:application}) shows that DeepOp-Informed can learn from as few as 200 training samples, yielding physiologically plausible source reconstructions and generalizing across subjects.

Although we demonstrated the method in the context of MEG, the framework is general and can be applied to a broad range of inverse or reconstruction problems.


\begin{figure}[!htb]%
\centering
\includegraphics[width = 1\textwidth]{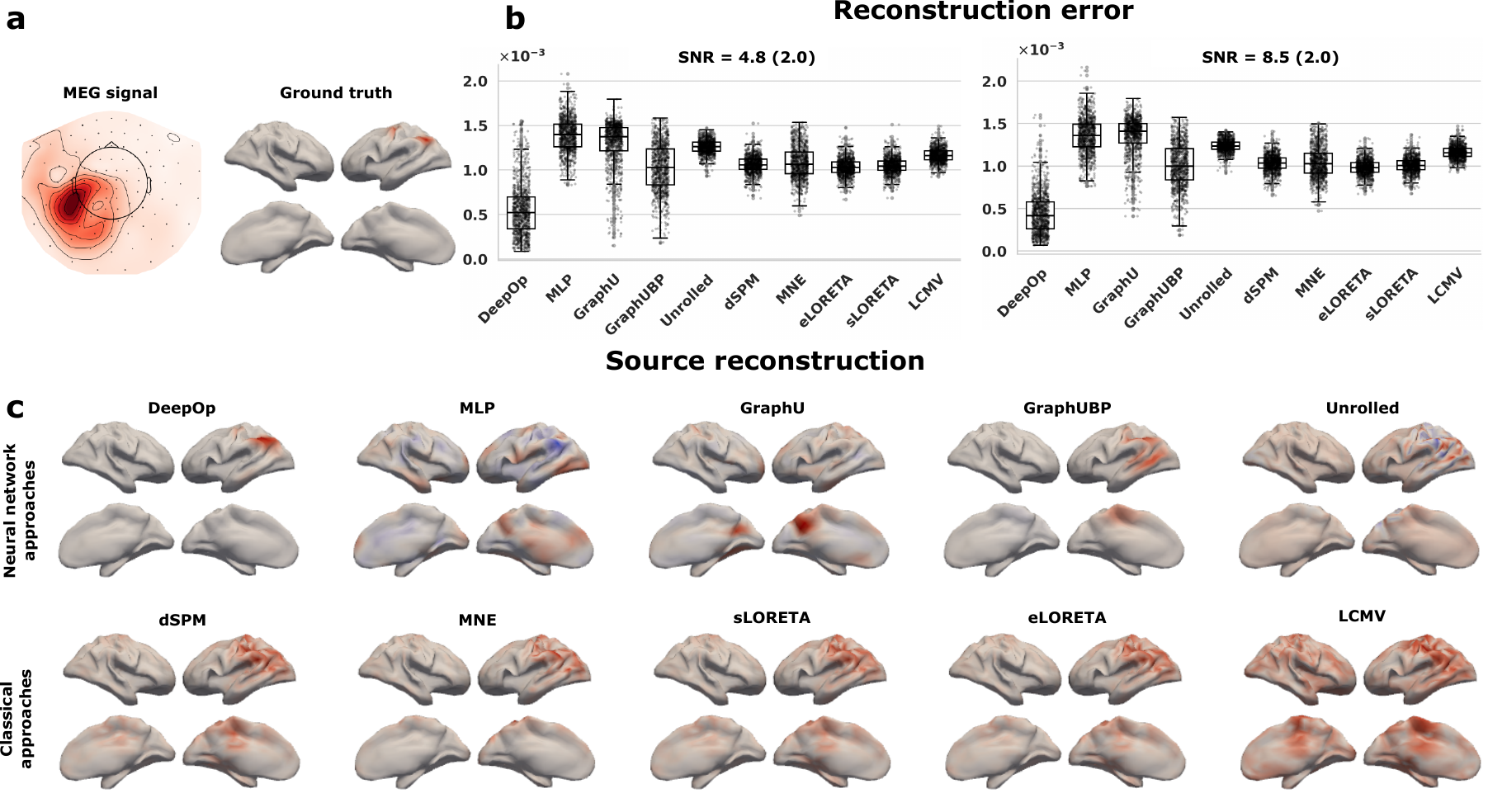}
\caption{\textit{MEG reconstruction performance of DeepOp-Informed compared with existing approaches.} \textbf{a.} Example of a synthetic test MEG sensor signal $y_i$ and its associated ground-truth source signal $x_i$. Training synthetic source signals $x_i \in \mathbb{R}^p$, with $p = 1284$, were generated on the reference cortical surface as one to three equal-intensity coactive peaks, with the number of peaks sampled uniformly from $\{1,2,3\}$ for each observation. Next, a forward model $\mathcal K_i \in \mathbb{R}^{204 \times 1284}$ was selected at random from a set of 13 subjects, and the sensor signal was generated as $y_i = \mathcal K_i x_i + \epsilon_i$, where $\epsilon_i$ denotes random noise. The task was to reconstruct the source signals $x_i$ from previously unseen noise-contaminated MEG measurements $y_i$. For methods that require training, 200 such observations were used as the training set. In addition, 1,000 further test observations were generated to evaluate performance, using forward models from six held-out subjects. For models requiring estimation of the source or noise covariance, these quantities were estimated from a separate validation set of 1,000 observations. \textbf{b.} Reconstruction error, measured as $\left\| x_i/\|x_i\|_2 - \hat x_i/\|\hat x_i\|_2 \right\|_2^2$, where $\hat{x}_i$ denotes the reconstructed source signal, was evaluated across 1,000 test observations for DeepOp-Informed and nine baseline approaches (see Section~\ref{sec:simulations} for a description). To standardize comparisons across methods, both $x_i$ and $\hat x_i$ were normalized to have Euclidean norm 1 before the reconstruction error was computed. \textbf{c.} Example reconstructions of test signals obtained with DeepOp-Informed (labeled DeepOp in the figure), together with alternative neural-network-based and classical baseline methods. DeepOp-Informed substantially improves reconstruction relative to the baseline methods.
}
\label{fig:simulation}
\end{figure}%

\section{Results}\label{sec:results}
\subsection{Overview of DeepOp-Informed}\label{sec:overview}
We assume that the data are generated according to the model
\begin{equation}
y_i = \mathcal K_i x_i + \epsilon_i, \qquad i=1,\ldots,n,
\end{equation}
where $x_i \in \mathbb{R}^p$ is the source signal, $y_i \in \mathbb{R}^s$ is the corresponding sensor measurement, $\mathcal K_i \in \mathbb{R}^{s \times p}$ is a potentially subject-specific forward model, and $\epsilon_i$ denotes observational noise. DeepOp-Informed learns a neural-network-based reconstruction operator through its parameters $\theta$,
\[
\mathcal{R}_{\theta,\mathcal K}:\mathbb{R}^s \to \mathbb{R}^p,
\]
which maps sensor-space measurements $y_i \in \mathbb{R}^s$ to source-space estimates $x_i \in \mathbb{R}^p$ for a given user-specified forward model $\mathcal K$. The architecture consists of two components. First, the differentiable biophysics-informed layer produces a preliminary reconstruction by combining the forward model $\mathcal K_i$ with a learned regularization. Under a working joint Gaussian assumption on the signal $x_i$ and unit-white noise $\epsilon_i$, this layer can be interpreted as computing
\begin{align}
\mathcal B_{\theta_1,\mathcal K_i}(y_i)
=&
\mathbb{E}\!\left[x_i \mid y_i, \mathcal K_i\right]\\
=& \left( {\mathcal K}_i^T {\mathcal K}_i + \mathcal{P}_{\theta_1} \right)^{-1} {\mathcal K}_i^T y_i.
\end{align}
The prior covariance structure of $x_i$ enters this conditional expectation through its inverse, $\mathcal P_{\theta_1}\in\mathbb{R}^{p\times p}$, which is parameterized by $\theta_1$ as a linear combination of powers of the Laplacian and learned end-to-end. This layer can be regarded as a generalization of a convolutional layer: it is linear, incorporates spatial information, is parameterized by a low-dimensional parameter vector, $\theta_1$, and is differentiable. Unlike a convolutional layer, however, it defines a global kernel \parencite{silverman1984spline} that disentangles the mixing introduced by the forward model. Several such layers, each with its own learnable parameters, can be combined in a manner similar to convolutional layers. This initial estimate $\mathcal B_{\theta_1,\mathcal K_i}(y_i)$, located on the source space, is refined by a geometric deep-learning module,
\[
\mathcal F_{\theta_2}:\mathbb{R}^p \to \mathbb{R}^p,
\]
implemented here as a graph neural network on the cortical mesh. The full reconstruction model is thus given by the composition
\[
\mathcal R_{\theta,\mathcal K_i}
=
\mathcal F_{\theta_2}\circ \mathcal B_{\theta_1,\mathcal K_i},
\qquad
\theta=(\theta_1,\theta_2).
\]
DeepOp-Informed is then trained to minimize the mean squared error (MSE)
\[
\frac{1}{n}\sum_{i=1}^n \left\| x_i - \mathcal R_{\theta,\mathcal K_i}(y_i) \right\|^2
\]
on synthetic data $(y_i, x_i, \mathcal K_i)$. This design embeds mechanistic information from the forward model directly into the architecture, rather than requiring the network to infer it solely from training pairs $(y_i, x_i)$. DeepOp-Informed can be applied across subjects with different subject-specific forward models, accommodating variations in the number of channels due to the exclusion of bad channels, as well as differences in brain and head geometry.

\subsection{Simulation experiments}\label{sec:simulations}
\subsubsection{Synthetic data generation}\label{sec:data_generation}
Synthetic data were generated both to validate the performance of the proposed model and to train the model for subsequent application to real MEG source reconstruction. Specifically, we used a triangular mesh representing a reference cortical surface, consisting of $p=1284$ vertices and 2,560 triangles, as the source space (Figure~\ref{fig:simulation}a). Subject-specific forward models ${\mathcal K}_i$ were computed from a cohort of 19 adolescents recruited at the University of Washington Institute for Learning \& Brain Sciences. Using MNE \parencite{gramfort2014mne}, each forward model was derived from the subject’s MRI and represented as a $s \times p$ matrix, where $s = 204$ denotes the number of gradiometer measurements acquired by the MEG system used in the study. All forward models were rescaled to have Frobenius norm 1, in order to mitigate floating-point approximation errors during training.

We generated synthetic data on the cortical surface as follows: source signals $x_i$ were created by generating a random number of equal-intensity localized sources centered at randomly selected vertices on the cortical surface, followed by smoothing to produce spatially localized but more diffuse peaks (see Figure~\ref{fig:simulation}a, Ground truth, for an example source signal). The training and test data were generated using non-overlapping sets of source vertices, each comprising 50\% of the cortical surface vertices, from which the point sources were sampled. These source signals represent the situation in which one or more brain regions are coactive during a specific task. We then generated $n_{\text{train}} = 200$ triplets $(y_i, x_i, \mathcal{K}_{\pi(i)})$ for training and $n_{\text{test}} = 1{,}000$ triplets for testing, where
\begin{equation}\label{eq:multi_subj_sim}
y_i = \mathcal{K}_{\pi(i)} x_i + \epsilon_i,
\end{equation}
with $\pi(i)$ sampled from $\{1,\ldots,13\}$ for the training set and from $\{14,\ldots,19\}$ for the test set, meaning training and test data use different subsets of the 19 forward models available. The noise term $\epsilon_i$ was modeled as a Gaussian random vector with mean zero and covariance matrix $\Sigma^\epsilon_{\pi(i)}$, which was computed from real MEG recordings of subject $\pi(i)$ using time windows preceding stimulus onset in auditory evoked-response experiments. This covariance matrix was normalized and multiplied by a factor of $\sigma^2$ to control the SNR, with $\sigma = 0.5$ or $\sigma = 1.2$, corresponding to average SNRs of 8.5 and 4.8, respectively, with a standard deviation of 2.  

\subsubsection{Reconstruction methods}\label{sec:approaches}
We compared the following models:
\begin{itemize}
\item[] \textbf{MLP}: A single hidden-layer perceptron with 204 input variables, 1,284 hidden-layer variables, and 1,284 output variables. A tanh nonlinear activation function was used.

\item[] \textbf{GraphU}: A model based on the one proposed in \textcite{zhu2018image}, adapted to the setting of signals supported on a nonlinear surface. It consists of an MLP, as described above, followed by a Graph U-Net \parencite{gao2022graph} with 32 hidden channels, 3 pooling layers, and a node pooling rate of 0.5 at each layer. The graph nodes were defined by the 1,284 vertices of the reference mesh, and the edges were derived from the triangles connecting these vertices.

\item[] \textbf{GraphUBP}: A model in which the fully connected module of GraphU was replaced by back-projected data, using $\mathcal{K}_i^T y_i$ as input. This represents the simplest approach for mapping data from sensor space to source space, after which the \textbf{GraphU} module was applied as described above. Replacing $\mathcal{K}_i^T y_i$ with the pseudoinverse projection $\mathcal{K}_i^\dagger y_i$ led to a further deterioration in performance.

\item[] \textbf{Unrolled}: An unrolled neural-network based on the architecture of \cite{aggarwal2019modl}, with 10 unrolled iterations, in which each stage consists of a data-consistency update followed by a learned denoising step implemented using the \textbf{GraphU} architecture described above. Although not specifically designed for this purpose, the approach allows the incorporation of subject-specific forward-model information.

\item[] \textbf{DeepOp-Informed}: Our proposed reconstruction model, $\mathcal R_{\theta,\mathcal K_i}$. The inverse source covariance in the biophysics-informed input layer was parameterized as the sparse matrix
\[
\mathcal{P}_{\theta_1} = \theta_1^{(0)} I + \theta_1^{(1)} \Delta, \qquad
\theta_1=(\theta_1^{(0)},\theta_1^{(1)}),
\]
where $I \in \mathbb{R}^{p \times p}$ is the identity matrix and $\Delta \in \mathbb{R}^{p \times p}$ is the discretized Laplace--Beltrami operator \parencite{dziuk1988finite} on the reference cortical mesh (see Section~\ref{sec:methods} for details). Parameterizing the inverse covariance $\mathcal{P}_{\theta_1}$ in this way allows the input layer to incorporate the spatial structure of the cortical surface into the preliminary reconstruction. The refinement module $\mathcal F_{\theta_2}$ was implemented using the same Graph U-Net architecture as in \textbf{GraphU}, with 32 hidden channels.

\item[] \textbf{Classical reconstruction methods.} Several classical linear reconstruction methods were also considered as baselines, including dSPM \parencite{dale2000dynamic}, MNE \parencite{hamalainen1994interpreting}, sLORETA \parencite{pascual-marqui2002standardized}, eLORETA \parencite{pascual-marqui2007discrete}, and LCMV \parencite{vanveen1997localization}, as implemented in MNE \parencite{gramfort2014mne}. As these are standard approaches, we do not describe them in detail here and instead refer the reader to the original references for additional information.
\end{itemize}

All neural-network approaches were trained on the same synthetic data. The matrices $\mathcal{K}_{\pi(i)}$ were provided to DeepOp-Informed, GraphUBP, the unrolled model, and the classical reconstruction methods. For classical linear reconstruction methods that require estimation of the source or noise covariance, these quantities were estimated from 1,000 observations, and the corresponding hyperparameters were tuned on the same set via cross-validation. The MLP, GraphU, GraphUBP, Unrolled, and DeepOp-Informed were trained using the MSE loss and the Adam optimizer \parencite{kingma2017adam}, for 300, 100, 100, 40, and 40 epochs, respectively. To limit computation time, the other methods were trained for fewer epochs than the MLP baseline. A learning rate of 0.01 was used for all methods.

\subsubsection{Reconstruction performance}

We ran 20 Monte Carlo repetitions. In each repetition, $n_{\text{train}} = 200$ triplets were used for training and $n_{\text{test}} = 1{,}000$ triplets for testing. The simulations were repeated under varying numbers of coactive source peaks, with the number of peaks per observation drawn at random from the sets $\{1\}$, $\{1,2\}$, or $\{1,2,3\}$. Simulations were performed under two noise settings: $\sigma = 0.5$ or $\sigma = 1.2$. 

For a single simulation repetition with $\sigma = 0.5$, in which the number of coactive sources was drawn uniformly at random from $\{1,2,3\}$, we compared reconstruction performance across 1,000 test observations using the squared error $\left\| x_i/\|x_i\|_2 - \hat x_i/\|\hat x_i\|_2 \right\|_2^2$, where $\hat{x}_i$ denotes the reconstructed source signal (Figure~\ref{fig:simulation}b). For the example sensor signal $y_i$ and its associated ground-truth source signal $x_i$ shown in Figure~\ref{fig:simulation}a, the corresponding reconstructions obtained with the different methods are shown in Figure~\ref{fig:simulation}c.

DeepOp-Informed outperformed all other models, demonstrating its ability to effectively leverage the additional physics information encoded in the forward model $\mathcal{K}_i$, despite the ill-posed nature of the inverse problem.
The next best-performing methods were GraphUBP and the classical reconstruction approaches, which are also able to incorporate information from the subject-specific forward model. The performance of the unrolled model appears to be more strongly affected by the ill-posedness of the MEG inverse problem, which is more severe than that encountered in MRI, the setting for which such architectures were originally developed. That said, we cannot rule out the possibility that alternative adaptations of this architecture to the MEG setting could achieve better results. By contrast, MLP and GraphU performed worst in this setting, likely because they do not have direct access to the subject-specific forward model and therefore struggle to generalize to unseen subjects. Next, for each simulation repetition, we computed the average MSE,
$n_{\text{test}}^{-1}\sum_{i=1}^{n_{\text{test}}}\left\|x_i/\|x_i\|_2-\hat{x}_i/\|\hat{x}_i\|_2\right\|_2^2$, and compared these averages across the 20 repetitions within each simulation setting (Supplementary Figure~\ref{fig:realistic_subject_specific}). These results indicate that the observed threefold reduction in reconstruction error is robust to differences in the number of coactive regions and to random variation in the training data.

Reconstruction performance was also assessed using the receiver operating characteristic curve (AUC). Specifically, to compute the AUC, the true source signal was normalized to have a maximum value of 1, and source locations with values greater than 0.9 were labeled as active. The corresponding reconstructed values were then used as prediction scores. DeepOp achieved a significantly higher mean AUC than the next-best-performing method in a two-sided paired \(t\)-tests across Monte Carlo repetitions (Supplementary Figure~\ref{fig:realistic_subject_specific_AUC}), indicating that its advantage extended to identifying active source regions and was not limited to normalized reconstruction error.

When DeepOp-Informed was provided with an incorrect forward model, it produced a noticeably different reconstruction (Supplementary Figure~\ref{fig:Multi-subj-wrong}). This finding highlights that DeepOp-Informed actively uses forward-model information to produce subject-specific source estimates.

We also repeated the experiment focusing on a single study participant; that is, we generated data according to
\begin{equation}\label{eq:single_subj_sim}
y_i = \mathcal{K} x_i + \epsilon_i,
\end{equation}
where $\mathcal{K}$ is the (fixed) forward model for the chosen subject. The noise term $\epsilon_i$ was modeled as a Gaussian random vector with mean zero and covariance matrix $\Sigma^\epsilon$, computed from real MEG recordings of that subject using time windows preceding stimulus onset in evoked-response experiments. The subject's own cortical surface mesh was used both to construct the graph for GraphU and to discretize the Laplace--Beltrami operator used in DeepOp-Informed. In this setting, DeepOp-Informed still outperformed the other methods, although the reconstruction performance of both MLP and GraphU improved (Supplementary Figure~\ref{fig:realistic_single_subject}). This is expected, because using the same forward model for both training and testing makes generalization easier for models that do not explicitly incorporate forward-model information.

\begin{figure}[!htb]%
\centering
\includegraphics[width = 0.8\textwidth]{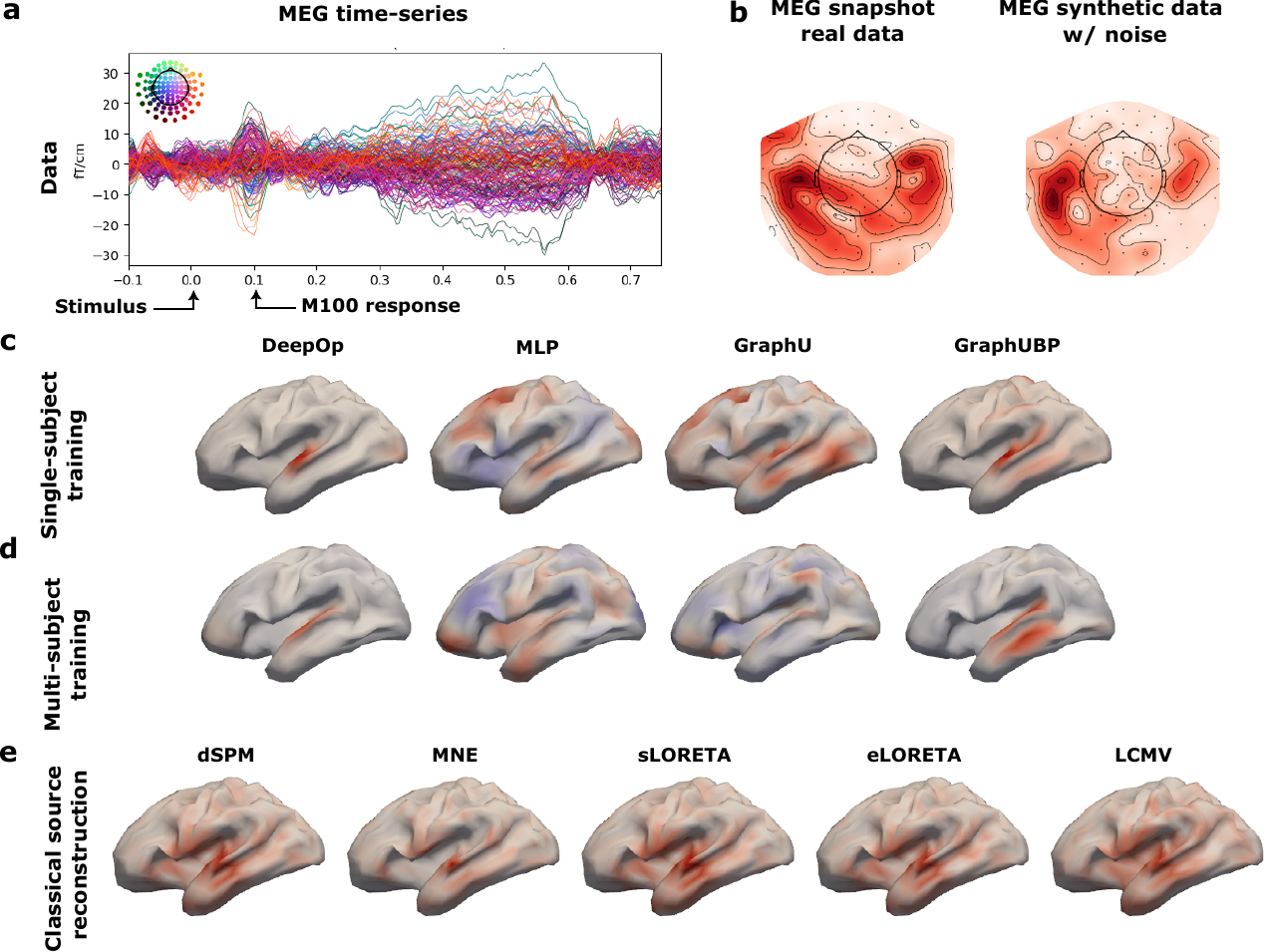}
\caption{\textit{Source reconstruction of MEG recordings.} \textbf{a.} MEG evoked-response time series obtained by averaging 420 trials corresponding to the first syllable of a speech stream consisting of trisyllabic pseudowords (details in Section~\ref{sec:application}). The timing of auditory stimulus onset and the M100 response, which occurs approximately 100 ms after stimulus onset, are annotated. \textbf{b.} MEG sensor signal associated with the identified M100 evoked response, shown with an example of a synthetic MEG sensor signal used for training. \textbf{c.} Source reconstruction of the M100 evoked response using neural-network-based approaches, including the proposed DeepOp-Informed (DeepOp in the figure), in the single-subject training setting (details in Section~\ref{sec:application}). \textbf{d.} Source reconstruction of the M100 evoked response using neural-network-based approaches in the multi-subject training setting (details in Section~\ref{sec:application}) for the same subject used in the single-subject training setting, who was held out during training. \textbf{e.} Source reconstruction using classical reconstruction methods. Refer to Section~\ref{sec:approaches} for a description of all compared methods. Although ground truth is unknown in this setting, DeepOp-Informed and GraphUBP yielded more plausible reconstructions, with DeepOp-Informed producing activation patterns that were more sharply localized to the auditory cortex, as expected in an auditory experiment. Classical source reconstruction methods identified regions that included the auditory cortex, but the estimated activation was spread over broader areas, appearing less aligned with cortical functional organization and more reflective of physical proximity to sensors with greater signal amplitude. Results for the unrolled neural network departed more markedly from expected patterns (Supplementary Figure~\ref{fig:Application_unrolled}), suggesting that further architectural modifications may be needed to adapt it to the MEG setting.}
\label{fig:application}
\end{figure}%

\subsection{Illustration on MEG recordings}\label{sec:application}
We applied DeepOp-Informed, together with the alternative approaches described in Section~\ref{sec:approaches}, to MEG data from the cohort of adolescent subjects used in the simulation study to obtain the forward models and prestimulus covariance structures. Our analysis focused on the signals recorded by the 204 gradiometer channels of the MEG system used in the study.

The models were applied to MEG recordings from auditory evoked-response experiments (Figure~\ref{fig:application}a). These experiments involved the presentation of speech streams composed of trisyllabic pseudowords, that is, sound sequences that resemble typical English phonetic and orthographic patterns but lack semantic meaning. The speech stream was designed such that all pseudowords occurred with equal frequency and the transitional probabilities between words were uniform. Specifically, the first syllable (S1) of each pseudoword perfectly predicted the subsequent two syllables (S2 and S3), whereas the final syllable provided no information about the following word, eliminating cross-word cues. Participants also received visual feedback; however, this component was not included in our analysis.

For each syllable position within the pseudowords, evoked responses were obtained by averaging the MEG signals across 420 trials. The evoked response associated with the first syllable is shown in Figure~\ref{fig:application}a,b. The M100 evoked response, which occurs approximately 100 ms after the onset of the auditory stimulus, was identified for each of the three evoked responses, and the corresponding source signal was reconstructed. The neural-network models used were those defined and trained as described in the simulation experiments in Section~\ref{sec:simulations}, with synthetic training data generated using noise level $\sigma = 1.2$ and the number of coactive sources sampled uniformly from $\{1,2,3\}$. Here, we focused on the left-hemisphere M100 response, which, based on the sensor-level signal, appears to provide a relatively clean auditory evoked response for evaluating spatial localization in a setting where the expected location of neural activation is reasonably well understood.

\subsubsection{Multi-subject training}
The models trained on synthetic data generated from the forward models and prestimulus covariance structures of 13 adolescents, as described in equation~\eqref{eq:multi_subj_sim}, were then applied to reconstruct the source activity underlying the M100 evoked response to the first syllable (stimulus condition S1) in a held-out subject.

Figure~\ref{fig:application}d shows that, although the ground truth is unknown in this setting, MLP and GraphU produced reconstructions that deviated substantially from the spatial patterns typically expected in auditory evoked-response experiments. The unrolled neural network departed even more markedly from these expected patterns (Supplementary Figure~\ref{fig:Application_unrolled}). DeepOp-Informed and GraphUBP yielded more plausible reconstructions, with DeepOp-Informed producing activation patterns that were more sharply localized to the auditory cortex. These differences likely reflect, at least in part, whether the methods explicitly incorporate information from the forward model, which in turn affects their ability to adapt to the unavoidable mismatch between the synthetic MEG data used for training and the real MEG data acquired from human subjects. The source reconstructions of the evoked responses to the second and third syllables in the auditory experiment, which can be viewed as perturbations \parencite{yu2020veridical} of the evoked response to the first syllable, show that DeepOp-Informed provides robust reconstructions while also producing sharp and physiologically plausible source estimates (Supplementary Figure~\ref{fig:Application_sens}). Classical source reconstruction methods were also applied (Figure~\ref{fig:application}e). These methods identified regions that included the auditory cortex, but the estimated activation extended over larger areas that appeared less aligned with cortical functional units and more reflective of physical proximity to sensors with greater signal amplitude.

\subsubsection{Single-subject training}
Additionally, we evaluated a single-subject model trained on synthetic data generated according to equation~\eqref{eq:single_subj_sim}, using the subject’s own forward model, cortical surface, and prestimulus covariance structure. The resulting reconstructions closely resembled those of the multi-subject model, but with even sharper localization of activation, particularly for DeepOp-Informed and, to a lesser extent, GraphUBP. This increased sharpness may be explained by the absence of uncertainty arising from generalization to unseen forward models and the absence of cortical registration error, as the single-subject model relies on the subject’s cortical surface rather than on a common reference surface. This class of models, however, requires separate training for each subject, making it less attractive for broader clinical use.

\subsection{Additional experiments}\label{sec:additional_simulations}

We conducted additional experiments to evaluate the proposed model in a more controlled setting, where the forward models could be generated explicitly. These experiments were designed specifically to compare DeepOp-Informed with physics-agnostic neural-network architectures and to assess the benefit of incorporating physics-based information into neural-network models. For these additional experiments with a smaller source space, we ran 50 Monte Carlo repetitions for each setting.

\noindent \textbf{Spherical source setting.} 
We used an icosahedral sphere with 642 vertices as the source space. We then generated 128 points equidistant from the sphere, as shown in Supplementary Figure~\ref{fig:sphere-simulations}, to simulate sensor positions. Next, we generated 200 random permutations of these 128 sensor locations and computed the corresponding forward models, $\{\mathcal{K}_j \in \mathbb{R}^{128 \times 642}: j = 1,\ldots, N=200\}$, using a realistic MEG forward model \parencite{oostenveld2011fieldtrip}. The 642 vertices were treated as dipole locations, with dipole orientations fixed orthogonally to the sphere, and the 128 points were used as sensor positions. Synthetic data were then generated according to equation (\ref{eq:multi_subj_sim}), yielding $n_{\text{train}}$ triplets $(y_i, x_i, \mathcal{K}_{\pi(i)})$ for training and $n_{\text{test}}$ triplets for testing, where $\pi(i)$ was sampled from $\{1,\ldots,100\}$ for the training set and from $\{101,\ldots,200\}$ for the test set. DeepOp-Informed was informed by the $\mathcal{K}_{\pi(i)}$ component of the triplet, which was also used to generate back-projected data for the GraphUBP approach. The simulations were repeated with a single active source or random counts of coactive sources per observation drawn uniformly from the sets $\{1,2\}$, or $\{1,2,3\}$. Simulations were performed under two settings, with noise covariance matrix $\Sigma_\epsilon = \sigma^2 I$ and $\sigma = 0.05$ or $\sigma = 0.2$, respectively. We used $n_{\text{train}} = 200$ training observations and $n_{\text{test}} = 1{,}000$ test observations. 

We compared the performance of DeepOp-Informed in the multi-subject training setting with that of models trained under the same simulation setup but with a fixed forward model. DeepOp-Informed matched the performance of its single-subject counterpart and still outperformed MLP, GraphU, and GraphUBP, despite the latter models having been trained in the simpler fixed-forward-model setting (Supplementary Figure~\ref{fig:sphere-subject-specific}). Interestingly, GraphUBP, which performed favorably in the realistic MEG setting, performed worse in this case, likely because of the even lower SNR. 

In the fixed-forward-model setting, the performance of MLP and GraphU improved with increasing sample size. However, MLP did not match the performance of DeepOp-Informed even with 20,000 training samples. GraphU required more than 5,000 training samples in the low-SNR setting and more than 10,000 in the high-SNR setting to reach the performance achieved by DeepOp-Informed with only 200 training samples (Supplementary Figure~\ref{fig:sphere-sample-size}).

\noindent \textbf{Robustness to forward model misspecification.} We evaluated the robustness of DeepOp-Informed to misspecification of the forward model. Data were generated in the spherical-source setting using a fixed forward model $\mathcal{K}$, while reconstruction was carried out using a perturbed surrogate model $\tilde{\mathcal{K}}$ instead of $\mathcal{K}$. The perturbation was applied independently to each entry of $\mathcal{K}$ and sampled from a uniform distribution, $\mathrm{Unif}(0, c\, k_{\max})$, where $k_{\max}$ is the largest entry of $\mathcal{K}$ and $c \in \{0, 0.01, 0.025, 0.05\}$. DeepOp-Informed continued to benefit from forward-model information and outperformed the baseline neural network models over a reasonable range of perturbation magnitudes. In lower-SNR regimes, it maintained improved performance even under larger perturbations (Supplementary Figure~\ref{fig:sphere-perturb}). In higher-SNR regimes, performance deteriorated more rapidly as the perturbation magnitude increased.

\noindent \textbf{Multi-head extension.} We evaluated a multi-head extension of DeepOp-Informed in the fixed-forward-model simulation scenario. In this architecture, multiple biophysics-informed layers are stacked in parallel, each with its own set of learnable parameters. Their outputs are concatenated and passed to a graph neural network, such that each node of the graph receives a feature vector with dimensionality equal to the number of heads, as described in Section~\ref{sec:extensions}. Specifically, we assessed the reconstruction performance of DeepOp-Informed with two-head and three-head biophysics-informed layers. The multi-head architecture improved performance, particularly in higher-SNR settings (Supplementary Figure~\ref{fig:sphere-multi-head}).

\noindent \textbf{Direct measurements.} We also evaluated DeepOp-Informed in a special case of the sample-specific forward-model setting in which the signal is measured directly in source space, but only at randomly sampled locations. This setting can be viewed as loosely analogous to invasive modalities such as intracranial EEG, with different sensor placement across subjects. Specifically, we generated triplets $(y_i, x_i, \mathcal{K}_i)$, with source signals $x_i$ generated as in the spherical source setting. To represent direct sensing at random source-space locations, each forward model $\mathcal{K}_i$ was constructed as an $s \times 642$ matrix with exactly one nonzero entry per row, located at a randomly selected column, and zeros elsewhere. We considered simulations with $s \in \{10, 50, 100\}$. We then compared reconstruction performance across all models considered. DeepOp-Informed outperformed all baseline methods (Supplementary Figure~\ref{fig:sphere-missing-data}).

\section{Discussion}\label{sec:discussion}
In this paper, we presented DeepOp-Informed, a novel framework that enables the regularization of neural network models by embedding information about the underlying physics of an ill-posed inverse problem. We showed that the proposed model leads to more stable and efficient learning. This advantage is particularly relevant in settings where data are scarce, making physics-agnostic neural network approaches infeasible due to their large number of parameters. In the context of electrophysiological source reconstruction, where training data are synthetically generated, this approach enables the model to learn to reconstruct the underlying source signals from a smaller curated dataset, without requiring extensive training data that attempt to account for all possible discrepancies between synthetic and real data, which is especially challenging in non-normative populations. The proposed framework can also be understood as a simulation-based approach for injecting prior information about the source signal into the neural-network model.

In contrast to other biophysics-informed approaches \parencite{sun2022deep}, DeepOp-Informed explicitly incorporates physical information through a novel custom layer, enabling the model to accommodate variations in the underlying physics across samples. In the context of electrophysiological source reconstruction, these properties enable DeepOp-Informed to account for subject-specific differences in the forward model, allowing it to be trained once and then applied to new subjects without retraining or expert intervention. The proposed application also leverages synthetically generated data, introducing an additional degree of freedom for modeling acquisition non-idealities that are otherwise difficult to incorporate directly into the model, such as realistic noise and prior knowledge about the localized nature of source signals. Future work could incorporate head-movement effects into synthetic data to improve model robustness to movement-related artifacts and compare performance with established signal-space separation preprocessing pipelines \parencite{taulu2005presentation, taulu2006spatiotemporal} commonly used in MEG preprocessing, which explicitly compensate for head-movement-induced distortions. 

Several other extensions are possible, including incorporating subject-specific cortical structure into the multi-subject version of the model so that subject-specific geometry can be leveraged more explicitly, as well as quantifying uncertainty of reconstructions and incorporating uncertainty in the forward model itself. Another important direction is to validate DeepOp-Informed’s ability to learn from training data partially composed of phantom data. In this setting, DeepOp-Informed could provide a practical mechanism for transferring information from phantoms, where the ground-truth activation is known, to MEG source reconstruction in human subjects.

The proposed contribution can also be situated within the literature on physics-informed neural networks \parencite[see, e.g.,][]{kashinath2021physicsinformed, raissi2019physicsinformed, subramanian2022adaptive, zhang2023filtered}, which typically encode physical constraints by augmenting the loss function with physics-derived penalty terms. However, such methods are generally designed to solve a single instance of a ``reconstruction'' problem. DeepOp-Informed is also related to deep operator learning \parencites{goswami2023physicsinformed, kovachki2024neural, lu2021learning}, which aims to learn mappings between function spaces from training data. In contrast to this class of approaches, DeepOp-Informed can accommodate sample-specific variations in the physics of signal generation.

In summary, DeepOp-Informed introduces a new approach for source reconstruction, image reconstruction, and potentially other tasks involving sensor-level data, since these can be addressed by replacing the refinement module with a task-specific network. The framework provides a new way to incorporate physics-based models into deep-learning architectures, improving data efficiency by leveraging mechanistic information while still allowing the data to guide the learning process. Although demonstrated here in the context of MEG source reconstruction, we expect this paradigm to extend naturally to a broad range of imaging modalities, including MRI, computed tomography, and positron emission tomography.

\section{Methods}\label{sec:methods}

In its most general form, and in line with formulations used in the inverse-problems literature \parencite{arridge2019solving} and the operator-learning literature \parencite{goswami2023physicsinformed, kovachki2024neural, lu2021learning}, the proposed model can be described as follows. Let $\Omega_x$ denote the spatial domain of the source signal, and let $\mathcal X$ denote a function space of scalar signals $x:\Omega_x \to \mathbb R$. Furthermore, let $\mathcal K \in \mathcal L(\mathcal X;\mathcal Y)$ denote the (potentially random) forward operator, mapping a signal $x \in \mathcal X$ to an observation $\mathcal K(x) \in \mathcal Y$. In its more general form, the objective is to recover the \textit{source signal} $x \in \mathcal X$ from a noisy \textit{sensor signal} $y \in \mathcal Y$, assumed to follow
\begin{equation}\label{eq:inv_model_intro}
y = \mathcal K(x) + \epsilon,
\end{equation}
where $\epsilon \in \mathcal Y$ represents observational noise. 

Here, however, we focus on the MEG source-reconstruction setting and assume that the spatial domain of the source signal, $\Omega_x$, is a triangular mesh $\mathcal M$ embedded in $\mathbb R^3$, with $p$ vertices representing the cortical surface. The source signal is modeled as a vector of values at the mesh vertices, $x \in \mathbb R^p$. Hence, the mesh vertices are also used as dipoles, oriented perpendicularly to the mesh. The vector $x$ can also be viewed as defining a piecewise linear function on the mesh, obtained by linear interpolation within each triangle, thereby representing a continuum of dipoles on the cortical surface. The sensor signal is represented by $y \in \mathbb R^s$, corresponding to measurements from an array of $s$ MEG sensors that provide a spatially discretized representation of the extracranial magnetic field. The forward model is assumed to be linear and is represented by a matrix $\mathcal K \in \mathbb R^{s \times p}$, which describes how the source signal $x$ on the cortical surface gives rise to the measured magnetic field, accounting for the physical properties of biological tissues and head geometry.

Formally, the task is to define a reconstruction operator $\mathcal{R}_\theta$, which, in the MEG setting considered here, reduces to a map $\mathcal{R}_\theta: \mathbb{R}^s \to \mathbb{R}^p$, where the finite-dimensional vector $\theta \in \Theta$ provides a parameterization of this map. The family $\{ \mathcal{R}_\theta : \theta \in \Theta \}$ defines a class of candidate reconstruction operators, where the ``best'' reconstruction operator is denoted by $\mathcal{R}_{\hat \theta}: \R^s \to \R^p$. In learning-based approaches, an optimal parameter vector $\hat{\theta} \in \Theta$ is obtained by minimizing the empirical risk:

\begin{equation}\label{eq:loss}
\hat{\theta} \in \arg \min_{\theta \in \Theta} \frac{1}{n} \sum_{i=1}^{n} \ell \left(x_i, \mathcal R_{\theta}(y_i)\right),
\end{equation}
where $(y_i, x_i) \in \R^s \times \R^p$ are i.i.d. samples of $(y, x)$ from equation (\ref{eq:inv_model_intro}). The function $\ell: \R^p \times \R^p \to \mathbb{R}$ denotes the loss function, which in this work is taken to be the squared Euclidean distance $\ell(x,x') = \|x - x' \|_2^2$.

Fully learned approaches model $\mathcal{R}_\theta$ with a neural network of generic architecture, ignoring the structure encoded in the forward model and instead fully relying on the training examples $(y_i, x_i)$ to learn the mapping from raw measurements to source data. To account for the non-local mixing introduced by the forward model, fully learned approaches typically include one or more \textit{fully connected layers} that serve as a pseudoinverse map $\mathcal B_{\theta_1}: \R^s \to \R^p$, mapping elements from $\R^s$ to $\R^p$. This is followed by a \textit{post-processing/refinement} neural network $\mathcal F_{\theta_2} : \R^p \to \R^p$ \parencite{arridge2019solving}, leading to the reconstruction model:
\[
\mathcal R_{\theta} := \mathcal F_{\theta_2} \circ \mathcal B_{\theta_1},
\]
where $\theta = (\theta_1, \theta_2)$ and $\circ$ denotes composition. An example of this approach, although in the context of MRI data, is found in \cite{zhu2018image}, where $\mathcal B_{\theta_1}: \mathbb{R}^s \ra \mathbb{R}^p$ is implemented as a feedforward neural network with two fully connected layers, and $\mathcal F_{\theta_2}: \mathbb{R}^p \ra \mathbb{R}^p$ is a convolutional auto-encoder. 

A key drawback of this class of neural networks is the large number of parameters introduced by the fully connected layers, which requires a large training set to learn these parameters. A straightforward modification is to restrict learning to $\mathcal{F}_{\theta_2}$ (i.e., the post-processing stage) and use, for instance, the pseudoinverse to define the map $\mathcal{B}(y) = \mathcal{K}^\dagger y$. However, this initial reconstruction step may not be optimal for subsequent learning-based refinements, leading to suboptimal performance, as shown in our simulations.

As depicted in Figure~\ref{fig:intro}, and described in more detail in the next section, to address these limitations, we introduce a novel architecture for $\mathcal{B}$ that embeds information about the forward model while retaining a learnable component. Hence, we call this model biophysics-informed. The learnable parameters of $\mathcal{B}$ are optimized in an end-to-end fashion, that is, jointly with the rest of the architecture to provide a preliminary inversion that is designed to facilitate subsequent refinements. The proposed framework is not limited to source reconstruction and can be extended to other tasks --- such as classification --- by modifying the architecture of $\mathcal{F}_{\theta_2}$. The biophysics-informed layer $\mathcal{B}$ effectively resolves the mixing introduced by the forward model, enabling the use of ``local'' architectures, such as convolutional neural networks, for downstream tasks. These architectures are known to generally outperform fully connected networks by leveraging spatial structure.

\subsection{A novel biophysics-informed layer}
Let $\tilde{\mathcal K} \in \mathbb{R}^{s \times p}$ denote a surrogate forward model providing an approximation to $\mathcal K$. We define the \textit{biophysics-informed custom layer} $\mathcal{B}_{\theta_1, \tilde{\mathcal K}}$ as  
\begin{equation}\label{eq:inv_problem}
y \mapsto \mathcal{B}_{\theta_1, \tilde{\mathcal K}}(y) = \argmin_{x \in \R^p} J_{\theta_1, \tilde{\mathcal K}}(y, x),
\end{equation}
where 
\begin{equation}
J_{\theta_1, \tilde{\mathcal K}}(y, x) = \left\| y - \tilde{\mathcal{K}} x \right\|_2^2 + \left\| \mathcal{P}_{\theta_1}^{\frac{1}{2}} x \right\|^2_2.
\end{equation}
Here, $\mathcal{P}_{\theta_1} \in \R^{p \times p}$ is a sparse matrix parameterized by a vector $\theta_1$. In our application, we define $\mathcal{P}_{\theta_1}$ as  
\begin{equation}\label{eq:operator}
\mathcal{P}_{\theta_1} = \theta_1^{(0)} I + \sum_{m=1}^M \theta_1^{(m)} \Delta^m,
\end{equation}
where $I \in \mathbb{R}^{p \times p}$ is the identity matrix, $\Delta \in \mathbb{R}^{p \times p}$ is the finite-element discretization of the Laplace--Beltrami operator on $\mathcal M$ \parencites{dziuk1988finite, dziuk2013finite, lila2024interpretable}, $\Delta^m \in \mathbb{R}^{p \times p}$ denotes its $m$th power, and $\theta_1^{(m)}$ is the $m$th entry of $\theta_1$. We also assume that all entries of $\theta_1$ are nonnegative and that $\theta_1^{(0)}$ is strictly positive, so that 
$\mathcal{B}_{\theta_1, \tilde{\mathcal K}}$ is well defined.

The output $\mathcal{B}_{\theta_1, \tilde{\mathcal K}}(y)$ provides a preliminary reconstruction by combining information from the known forward model $\tilde{\mathcal{K}}$ and the prior encoded in the regularization term. The model in equation \eqref{eq:inv_problem}, with $\theta_1$ selected so that $\left\|\mathcal{P}^{\frac{1}{2}}_{\theta_1} x \right\|^2_2$ quantifies deviation from a prior that encodes domain knowledge about the reconstructed signal or image -- such as smoothness -- is widely used in the inverse problem literature \parencite{arridge2019solving} and in MEG source reconstruction \parencite{lin2006assessing}. When $\mathcal{P}_{\theta_1} = \theta_1^{(0)} I$, for a fixed $\theta_1^{(0)}$, the resulting operator corresponds to a Tikhonov-type inverse.

Under a working joint Gaussian assumption on the signal $x$ and unit-white noise $\epsilon$, $\mathcal B_{\theta_1,\tilde{\mathcal K}}$ can also be interpreted as computing
\begin{align}
\mathcal B_{\theta_1,\tilde{\mathcal K}}(y)
=&
\mathbb{E}\!\left[x \mid y, \tilde{\mathcal K}\right],
\end{align}
where the prior covariance structure of $x$ enters this conditional expectation through its inverse $\mathcal P_{\theta_1}  \in \R^{p \times p}$. Therefore, the proposed input layer can also be understood as an empirical Bayes approach.

Data-driven approaches to learning the penalty have also been explored in the inverse problems literature. The penalty may be parameterized explicitly \parencite{carlosdelosreyes2013image, delosreyes2017bilevel, kunisch2013bilevel, lunz2018adversarial} or implicitly defined through a learnable operator in an unrolled iterative minimization scheme \parencite[see, e.g.,][]{andrychowicz2016learning, gilton2019neumann, aggarwal2019modl}. However, these models are typically designed for settings with higher SNRs and more complex source-space structure, and they often involve deeper architectures with associated optimization and training-stability challenges.

In contrast, in this work, we propose using $\mathcal B_{\theta_1, \tilde{\mathcal K}}$ as a building block in a neural network architecture, where $\theta_1$ is learned from data. This is motivated by a key observation: due to the quadratic structure of $J_{\theta_1, \tilde{\mathcal K}}$, the mapping $y \mapsto \mathcal{B}_{\theta_1, \tilde{\mathcal K}}(y)$ in \eqref{eq:inv_problem} is linear with respect to $y$. Consequently, the map $\mathcal{B}_{\theta_1, \tilde{\mathcal K}}$ can be interpreted as a \textit{non-local} linear filter:  
\[
y \mapsto \mathcal B_{\theta_1, \tilde{\mathcal K}}(y) = \left(\tilde{\mathcal K}^T \tilde{\mathcal K} + \mathcal{P}_{\theta_1} \right)^{-1} \tilde{\mathcal K}^T y,
\]
where $\tilde{\mathcal K}^T$ denotes the transpose of $\tilde{\mathcal K}$. The map $\mathcal B_{\theta_1, \tilde{\mathcal K}}$ combines information from the known forward model $\tilde{\mathcal K}$ and the learned component $\mathcal{P}_{\theta_1}$ to reconstruct $x$ from $y$. Hence, the output of this filter can be interpreted as a preliminary inversion of the signal at the sensor level, which is then passed to the subsequent neural network module, $\mathcal{F}_{\theta_2}$, for further refinement. While this formulation shows that the proposed biophysics-informed layer is linear with respect to the sensor data $y$, explicitly computing this inverse matrix is not necessary, as we will demonstrate. As opposed to the model analyzed in \textcite{maier2019learning}, the forward model here can be severely ill-posed. Our model is also conceptually related to convolutional kernel networks \parencite{mairal2014convolutional}, where the kernel is implicitly determined by the forward model and the learnable penalty \parencite{silverman1984spline}.  

Multi-head extensions of the proposed biophysics-informed layer will also be introduced, enabling more complex preliminary inversions. Importantly, this preliminary inversion is designed to embed the physics information encoded in the forward model and to facilitate downstream processing through \textit{joint optimization} of $\theta_1$ and $\theta_2$ in an end-to-end manner. 

In the following, we describe the architecture of the refinement module $\mathcal{F}_{\theta_2}$, which in our application is designed to incorporate anatomical constraints of the brain surface where the source signal is located, to improve the reconstruction.

\subsection{Refinement module}
We introduce the following standard operations commonly used in geometric deep learning \parencite{bronstein2021geometric}:
\begin{itemize}
    \item A local linear layer $\mathcal A_{\theta}: \R^{p \times d_{\text{in}}} \to \R^{p \times d_\text{out}}$, where $d_{\text{in}}$ and $d_{\text{out}}$ denote the number of input and output features, respectively. An example, with $d_{\text{in}} = d_{\text{out}} = 1$, is $\mathcal A_{\theta} = \theta^{(0)} I + \sum_{m=1}^M \theta^{(m)} \Delta^m$, where $\Delta$ can be taken to be the graph Laplacian.
    The filter $\mathcal A_{\theta}$ provides a possible generalization of the concept of convolution to graphs \parencite{defferrard2016convolutional, kipf2017semisupervised}.
    \item A nonlinear activation function $\sigma: \mathbb{R} \to \mathbb{R}$ applied elementwise to the input signals.
    \item A local pooling or upscaling operator $\mathcal S: \R^{p' \times d_{\text{in}}} \to \R^{p'' \times d_{\text{in}}}$ that either reduces the dimensionality of the input by aggregating local information or increases it by upscaling the graph signal, together with the associated graph.
\end{itemize}

Using these building blocks, we construct a neural network, i.e., a nonlinear mapping $\mathcal{F}_{\theta_2}: \R^p \to \R^p$, such that  
\begin{equation}\label{eq:NET_generic}
\mathcal{F}_{\theta_2} := \mathcal A_{\theta_2^{(L)}} \circ \mathcal S \circ \sigma  \circ \dots \circ \mathcal S \circ \sigma \circ \mathcal A_{\theta_2^{(1)}},
\end{equation}
where each block is defined so that the output space of one block matches the input space of the next and $\theta_2^{(j)}$ denotes the $j$th block of the vector $\theta_2$. With a slight abuse of notation, as the formulation in equation \eqref{eq:NET_generic} does not define skip connections, our exact formulation of $\mathcal{F}_{\theta_2}$ is the Graph U-Net architecture proposed in \textcite{gao2022graph}. Other architectures, such as 2D convolutional neural networks, could also be adopted depending on the specific inverse problem setting considered.

The overall architecture is then defined as
\begin{equation*}
\mathcal{R}_{\theta,\tilde{\mathcal K}}
:= \mathcal F_{\theta_2} \circ \mathcal B_{\theta_1,\tilde{\mathcal K}},
\end{equation*}
where the network parameters $\theta = (\theta_1,\theta_2)$ are learned from a training set $\{(y_i,x_i) : i = 1,\ldots,n\}$ by gradient-based optimization of the objective function in equation~\eqref{eq:loss}, namely
\[
\frac{1}{n} \sum_{i=1}^n \ell\!\left(x_i,\mathcal R_{\theta,\tilde{\mathcal K}}(y_i)\right).
\]
Once the model has been trained and the optimal parameters $\hat{\theta}$ have been estimated, the source signal $x_0$ associated with a measured sensor signal $y_0$ is reconstructed as
\[
\hat{x}_0 = \mathcal R_{\hat{\theta},\tilde{\mathcal K}}(y_0).
\]

The proposed architecture extends naturally to the setting of sample-specific forward models $\tilde{\mathcal K}_i$. In that case, the network parameters $\theta = (\theta_1,\theta_2)$ are learned from training examples $\{(y_i,x_i,\tilde{\mathcal K}_i) : i = 1,\ldots,n\}$ by gradient-based optimization of the modified objective
\[
\frac{1}{n} \sum_{i=1}^n \ell\!\left(x_i,\mathcal R_{\theta,\tilde{\mathcal K}_i}(y_i)\right).
\]
Once the model has been trained, the source signal $x_0$ associated with a measured sensor signal $y_0$ and forward model $\tilde{\mathcal K}_0$ is reconstructed as
\[
\hat{x}_0 = \mathcal R_{\hat{\theta},\tilde{\mathcal K}_0}(y_0).
\]

We propose to train the model using gradient-based optimization. A key component of this approach is defining the gradient of the objective function with respect to the learnable parameters. In particular, we derive an efficient computational form for $\frac{\partial \ell}{\partial \mathcal{P}_{\theta_1}}$, as the other gradients involved have well-established forms. We defer this derivation to Section~\ref{sec:discretization}.



\subsection*{Multi-head biophysics-informed module}\label{sec:extensions}

\begin{figure}%
\centering
\includegraphics[width = 0.45\textwidth]{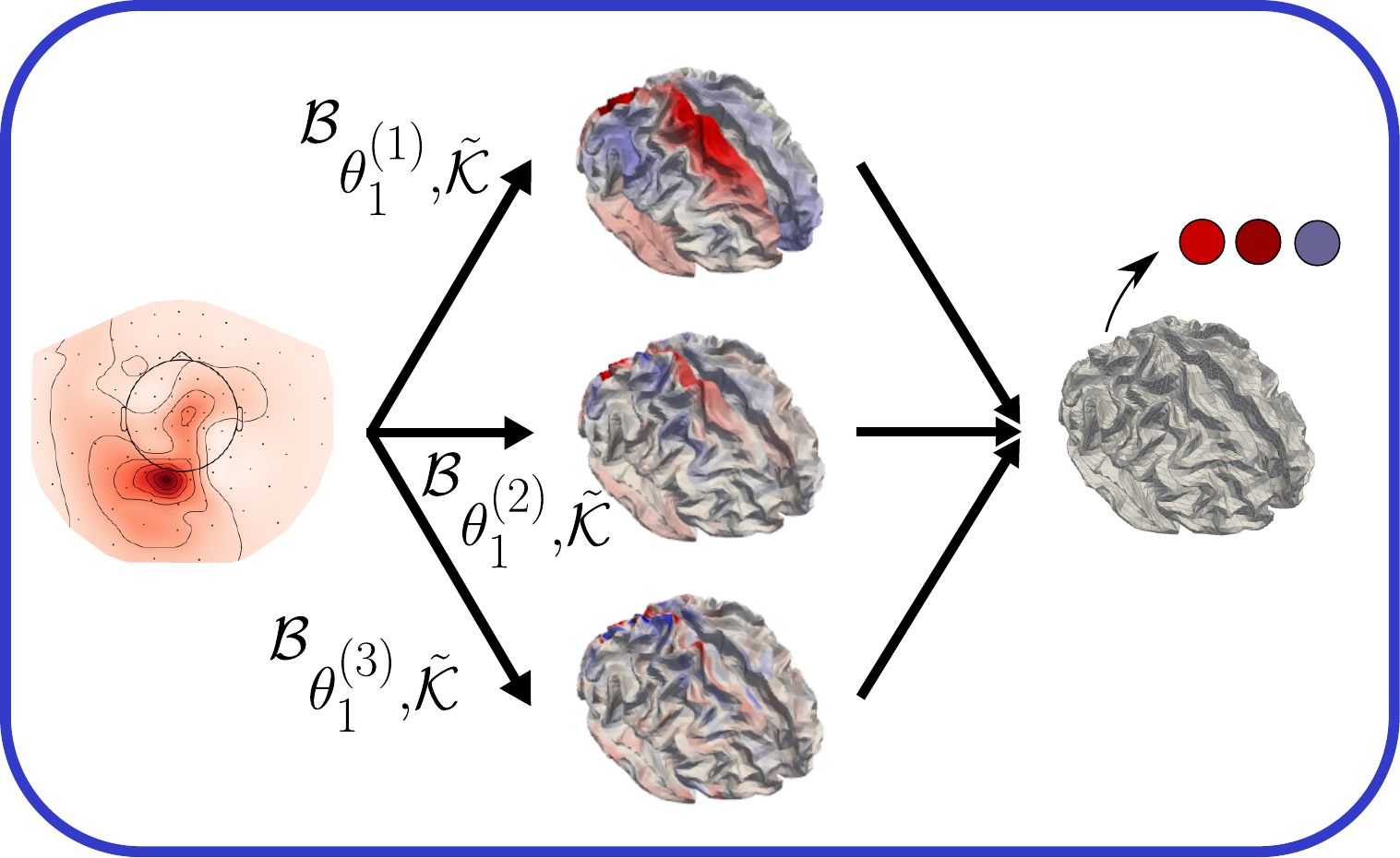}
\caption{Illustration of the proposed multi-head extension of DeepOp-Informed, obtained by stacking multiple biophysics-informed layers in parallel, each with its own set of learnable parameters. The outputs of these heads are concatenated and used as input to a graph neural network, so that each node receives a feature vector whose dimension equals the number of heads. This design can improve reconstruction performance by increasing the expressiveness of the biophysics-informed component.
}
\label{fig:extensions}
\end{figure}%

The presented biophysics-informed layer can serve as a building block for more sophisticated modules. As shown in Figure~\ref{fig:extensions}, one possible extension uses multiple heads in parallel, whose outputs are concatenated to define a multi-head biophysics-informed layer, enabling reconstruction with different learned regularization patterns, which can then be merged by the refinement module. Specifically, the following multi-head extension could be employed:  
\[
y \mapsto \left[ \mathcal{B}_{\theta_{1}^{(1)}, \tilde{\mathcal{K}}}(y), \ldots, \mathcal{B}_{\theta_{1}^{(H)}, \tilde{\mathcal{K}}}(y) \right], \qquad \theta_1 = \left(\theta_{1}^{(1)}, \ldots, \theta_{1}^{(H)}\right)
\]
where each head captures complementary information with different learnable parameters $\theta_{1}^{(h)}$. These representations are treated as separate input features by the refinement module and nonlinearly combined to produce the final reconstruction, potentially improving reconstruction performance. This extension can be used both in the fixed-forward-model setting and in the sample-specific forward-model setting.

\subsection{Gradient-based optimization}\label{sec:discretization}
We first provide a computationally efficient characterization of the forward pass, that is, the computation of the output of the biophysics-informed layer from the MEG signal, through the following proposition:

\begin{proposition}\label{prop:forward}
For a given input signal $y_i \in \mathbb{R}^s$ and parameter vector $\theta_1$, the output of the biophysics-informed layer $o_i = \mathcal{B}_{{\theta_1}, \tilde{\mathcal{K}}} (y_i) \in \mathbb{R}^{p}$ is given by the last $p$ entries of the solution to the following linear system:
\begin{align}\label{eq:forward}
\begin{cases}
\text{Find } \tilde{o}_i \text{ such that }\\
L_{\theta_1, \tilde{\mathcal{K}}} \tilde{o}_i = \tilde{y}_i,
\end{cases}
\end{align}
where
\begin{equation}
\tilde{y}_i = \begin{bmatrix} 0_{s \times 1} \\
-\tilde{\mathcal{K}}^T y_i \end{bmatrix}, \quad 
L_{\theta_1, \tilde{\mathcal{K}}} = 
\begin{bmatrix} 
I_s & \tilde{\mathcal{K}} \\ 
\tilde{\mathcal{K}}^T & -\mathcal{P}_{\theta_1}
\end{bmatrix},
\end{equation}
with $I_s$ denoting an $s \times s$ identity matrix and $0_{s \times 1}$ denoting an $s$-dimensional zero vector.
\end{proposition}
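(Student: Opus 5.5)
The plan is to solve the block system directly and show that its lower block reproduces the closed form $\mathcal{B}_{\theta_1,\tilde{\mathcal K}}(y_i)=(\tilde{\mathcal K}^T\tilde{\mathcal K}+\mathcal P_{\theta_1})^{-1}\tilde{\mathcal K}^T y_i$ stated earlier. The argument has three steps: justify that closed form, eliminate the first block of \eqref{eq:forward}, and establish that $L_{\theta_1,\tilde{\mathcal K}}$ is invertible so that ``the solution'' is unique.

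\textbf{Step 1: the closed form.} Since $J_{\theta_1,\tilde{\mathcal K}}(y_i,\cdot)$ is quadratic, setting its gradient to zero gives the normal equations
\[
(\tilde{\mathcal K}^T\tilde{\mathcal K}+\mathcal P_{\theta_1})\,x=\tilde{\mathcal K}^T y_i .
\]
The matrix on the left is symmetric positive definite. Indeed, $\Delta$ (the stiffness-type finite-element Laplacian) is symmetric positive semidefinite, so every power $\Delta^m$ is as well. Together with $\theta_1^{(m)}\ge 0$ and $\theta_1^{(0)}>0$, this gives $\mathcal P_{\theta_1}\succeq\theta_1^{(0)}I\succ0$. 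Hence the minimizer is unique and equals the closed form.

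Here I am reading $\mathcal P_{\theta_1}^{1/2}$ as the symmetric square root, which requires $\mathcal P_{\theta_1}$ to be symmetric. If $\Delta$ were the mass-lumped, non-symmetric version, one would instead interpret the penalty as $x^T\mathcal P_{\theta_1}x$. This is the point I would flag as the main subtlety: the well-posedness hypotheses must deliver positive definiteness.

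\textbf{Step 2: block elimination.} Write $\tilde o_i=(u,o)$ with $u\in\R^s$ and $o\in\R^p$. The first block row reads $u+\tilde{\mathcal K}o=0$, so $u=-\tilde{\mathcal K}o$. Substituting into the second row gives
\[
\tilde{\mathcal K}^T u-\mathcal P_{\theta_1}o=-\tilde{\mathcal K}^T y_i,
\]
which becomes $-(\tilde{\mathcal K}^T\tilde{\mathcal K}+\mathcal P_{\theta_1})\,o=-\tilde{\mathcal K}^T y_i$. This is exactly the normal equation from Step 1, so $o=\mathcal B_{\theta_1,\tilde{\mathcal K}}(y_i)$.

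\textbf{Step 3: invertibility of $L_{\theta_1,\tilde{\mathcal K}}$.} The Schur complement of the $I_s$ block is $-\mathcal P_{\theta_1}-\tilde{\mathcal K}^T\tilde{\mathcal K}$, which is nonsingular by Step 1. Therefore $L_{\theta_1,\tilde{\mathcal K}}$ is invertible and the system has a unique solution, whose last $p$ entries are $o_i$. The paper's point is that this saddle-point system is sparse whenever $\mathcal P_{\theta_1}$ is sparse, and it avoids forming $\tilde{\mathcal K}^T\tilde{\mathcal K}$ explicitly. That is a computational remark and does not need to be proved.

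The main obstacle is Step 1: making sure $\mathcal P_{\theta_1}$ is symmetric positive definite under the stated nonnegativity assumptions. Steps 2 and 3 are routine algebra.
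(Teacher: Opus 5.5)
Your proposal is correct and follows the paper's proof. Both partition $\tilde o_i=(u_i,o_i)$, use the top block row to get $u_i=-\tilde{\mathcal K}o_i$, and substitute into the bottom row to recover $(\tilde{\mathcal K}^T\tilde{\mathcal K}+\mathcal P_{\theta_1})o_i=\tilde{\mathcal K}^T y_i$. Your Steps 1 and 3 (positive definiteness of $\mathcal P_{\theta_1}$, and invertibility of $L_{\theta_1,\tilde{\mathcal K}}$ via the Schur complement) just make explicit what the paper assumes.

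One small correction to your aside. For a non-symmetric $\mathcal P_{\theta_1}$, reading the penalty as $x^T\mathcal P_{\theta_1}x$ gives normal equations with $\tfrac12(\mathcal P_{\theta_1}+\mathcal P_{\theta_1}^T)$ rather than $\mathcal P_{\theta_1}$, so it would not match the block system. Symmetry of $\mathcal P_{\theta_1}$, which the paper also tacitly assumes, is genuinely needed to identify the argmin with the closed form, although your Step 2 elimination itself uses no symmetry.
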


Equation~(\ref{eq:forward}) provides a reformulation of the forward pass. When $s \ll p$, as in the MEG application, this allows us to leverage the sparsity of $L_{\theta_1, \tilde{\mathcal{K}}}$ using efficient linear system solvers. Furthermore, if the forward models are identical across samples and a direct method is used to solve the linear system, matrix factorization can be performed once and reused to compute the outputs for all inputs. If an iterative method is employed, preconditioning can also be applied. Both approaches result in reduced overall computation time.

To apply gradient-based minimization, it is necessary to compute the gradients of the objective function with respect to the unknown parameters. Given that backpropagation is implemented for most standard operations in automatic differentiation packages, we focused exclusively on providing explicit expressions for $\frac{\partial \ell}{\partial \mathcal{P}_{\theta_1}}$, which is the term involving $\mathcal{B}_{{\theta_1}, \tilde{\mathcal{K}}}$. The following proposition details this computation.

\begin{proposition}\label{prop:backward}
Let $\bar{y}_i = \tilde{\mathcal{K}}^T y_i$, the output of the biophysics-informed layer $o_i = \mathcal{B}_{{\theta_1}, \tilde{\mathcal{K}}} (y_i)$, and $\frac{\partial \ell}{\partial o_i}$ be given. Then, the partial derivative $\frac{\partial \ell}{\partial \calP_{\theta_1}}$ is given by:
\begin{align}
\frac{\partial \ell}{\partial \calP_{\theta_1}} = - \sum_{i=1}^n \frac{\partial \ell}{\partial \bar{y}_i} \otimes o_i,
\end{align}
where $\otimes$ denotes the outer product and $\frac{\partial \ell}{\partial \bar{y}_i}$ 
is given by the last $p$ entries of the solution to the following linear system:
\begin{align}
\begin{cases}
\text{Find } \widetilde{\frac{\partial \ell}{\partial \bar y_i}} \text{ such that }\\
L_{\theta_1, \tilde{\mathcal{K}}} \widetilde{\frac{\partial \ell}{\partial{\bar{y}_i}}} = \widetilde{\frac{\partial \ell}{\partial o_i}},
\end{cases}
\end{align}
where $\widetilde{\frac{\partial \ell}{\partial o_i}} = \begin{bmatrix} 0_{s \times 1} \\
-\frac{\partial \ell}{\partial o_i} \end{bmatrix}$.
\end{proposition}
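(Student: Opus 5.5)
The plan is to reduce everything to the normal-equation form $A_{\theta_1} o_i = \bar y_i$, where $A_{\theta_1} := \tilde{\mathcal K}^T\tilde{\mathcal K} + \mathcal P_{\theta_1}$ and $\bar y_i = \tilde{\mathcal K}^T y_i$. From there, the gradient follows from standard adjoint (implicit differentiation) calculus. The block system in the statement is then identified, via Proposition~\ref{prop:forward}, as a sparse way of applying $A_{\theta_1}^{-1}$ to a vector.

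\textbf{Step 1: the block system implements $A_{\theta_1}^{-1}$.} For any $v\in\R^p$, solve $L_{\theta_1,\tilde{\mathcal K}}\begin{bmatrix} a\\ b\end{bmatrix} = \begin{bmatrix}0\\ -v\end{bmatrix}$.
\begin{itemize}
\item The first block row gives $a = -\tilde{\mathcal K} b$.
\item Substituting into the second block row gives $\tilde{\mathcal K}^T a - \mathcal P_{\theta_1} b = -v$, that is, $A_{\theta_1} b = v$.
\end{itemize}
Hence the last $p$ entries equal $A_{\theta_1}^{-1}v$. This is exactly the Schur-complement argument behind Proposition~\ref{prop:forward}, which is the case $v=\bar y_i$.

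Well-posedness follows because $A_{\theta_1}$ is symmetric positive definite. I take $\Delta$ to be symmetric positive semidefinite (the stiffness-type discretization), so each power $\Delta^m$ is positive semidefinite. With $\theta_1^{(m)}\ge 0$ and $\theta_1^{(0)}>0$, this gives $\mathcal P_{\theta_1}\succ 0$, and therefore $A_{\theta_1}\succ 0$. If instead $\Delta$ is a mass-weighted, non-symmetric discretization, I would note that symmetry of $\mathcal P_{\theta_1}$ is implicitly assumed. This is the assumption I expect to need care, as discussed below.

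\textbf{Step 2: implicit differentiation.} Since $o_i = A_{\theta_1}^{-1}\bar y_i$, a perturbation $d\mathcal P$ changes $A_{\theta_1}$ by $d\mathcal P$, because $\tilde{\mathcal K}$ is fixed. Differentiating $A_{\theta_1} o_i=\bar y_i$ gives
\[
A_{\theta_1}\,do_i + d\mathcal P\, o_i = 0, \qquad\text{so}\qquad do_i = -A_{\theta_1}^{-1} d\mathcal P\, o_i .
\]
Then
\[
d\ell = \sum_i \left\langle \frac{\partial \ell}{\partial o_i}, do_i\right\rangle = -\sum_i \left\langle A_{\theta_1}^{-T}\frac{\partial \ell}{\partial o_i}, d\mathcal P\, o_i\right\rangle .
\]
Next, define the adjoint vector $\frac{\partial \ell}{\partial \bar y_i} := A_{\theta_1}^{-T}\frac{\partial\ell}{\partial o_i}$. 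This is literally the chain-rule gradient with respect to $\bar y_i$, since $\partial o_i/\partial \bar y_i = A_{\theta_1}^{-1}$. Using $\langle u, d\mathcal P\, w\rangle = \operatorname{tr}\big((u w^T)^T d\mathcal P\big)$, the Frobenius-gradient identification gives
\[
\frac{\partial\ell}{\partial \mathcal P_{\theta_1}} = -\sum_i \frac{\partial \ell}{\partial \bar y_i}\otimes o_i .
\]

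\textbf{Step 3: the adjoint vector comes from the block system.} By symmetry of $A_{\theta_1}$, $A_{\theta_1}^{-T}=A_{\theta_1}^{-1}$. Applying Step 1 with $v = \partial\ell/\partial o_i$ shows that $\partial\ell/\partial \bar y_i$ is the last $p$ entries of the solution of $L_{\theta_1,\tilde{\mathcal K}}\,\widetilde{\partial\ell/\partial \bar y_i} = \widetilde{\partial\ell/\partial o_i}$, as claimed.

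The only delicate point is this symmetry. Without it, the adjoint system would require $L^T$, in which the lower-right block is $-\mathcal P_{\theta_1}^T$. I would therefore state explicitly that $\mathcal P_{\theta_1}$, and hence $L_{\theta_1,\tilde{\mathcal K}}$, is symmetric, or else replace $L$ by $L^T$ in the backward system. If the gradient with respect to $\theta_1$ is also wanted, it then follows by the chain rule as $\partial\ell/\partial\theta_1^{(m)} = \langle \partial\ell/\partial\mathcal P_{\theta_1}, \Delta^m\rangle_F$.
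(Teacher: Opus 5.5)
Your proposal is correct and follows essentially the same route as the paper. Both arguments write $o_i=\mathcal S^{-1}\bar y_i$ with $\mathcal S=\tilde{\mathcal K}^T\tilde{\mathcal K}+\mathcal P_{\theta_1}$, differentiate the inverse to obtain $-\sum_i \bigl(\mathcal S^{-T}\frac{\partial \ell}{\partial o_i}\bigr)\otimes\bigl(\mathcal S^{-1}\bar y_i\bigr)$, and use symmetry of $\mathcal S$ plus the block elimination of Proposition~\ref{prop:forward} to read $\frac{\partial \ell}{\partial \bar y_i}$ off the last $p$ entries of the block solve. Your differential notation replaces the paper's index computation, and you make explicit two points the paper leaves implicit: the block-system equivalence it calls straightforward, and the symmetry of $\mathcal P_{\theta_1}$ (hence of $L_{\theta_1,\tilde{\mathcal K}}$) that the backward solve relies on.
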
%
This proposition shows that, given $\frac{\partial \ell}{\partial o_i}$, obtained via backpropagation from the subsequent layers, computing $\frac{\partial \ell}{\partial \bar{y}_i}$ amounts to computing the forward pass linear system in equation~(\ref{eq:forward}), using $\frac{\partial \ell}{\partial o_i}$ in place of $\tilde{\mathcal K}^T y_i$. Furthermore, once $\frac{\partial \ell}{\partial \bar{y}_i}$ is obtained, the gradient $\frac{\partial \ell}{\partial \mathcal{P}_{\theta_1}}$ reduces to a simple outer product involving $\frac{\partial \ell}{\partial \bar{y}_i}$ and $o_i$. This illustrates how both the forward and backward passes can leverage sparse linear systems to achieve more efficient computation. The filter $\mathcal{B}_{\theta_1, \tilde{\mathcal{K}}}$ was implemented in PyTorch as a custom differentiable operation via \texttt{torch.autograd.Function} \parencite{paszke2017automatic} with its forward and backward passes defined using the results derived above, enabling the biophysics-informed layer to be used in the same way as any other PyTorch layer.

Finally, for the refinement module $\mathcal{F}_{\theta_2}$, we adopted the Graph U-Net architecture implemented in PyTorch Geometric \parencite{gao2022graph}, with the graph defined by the vertices $V$ of the brain surface and edges $E$ forming the triangular faces $F$.

\printbibliography

\newpage
\appendix
\appendixpage
\setcounter{page}{1}
\setcounter{equation}{0}
\renewcommand{\theequation}{S\arabic{equation}}
\setcounter{figure}{0}
\let\oldthefigure\thefigure
\renewcommand{\thefigure}{S\oldthefigure}

\section{Additional simulation results}\label{sec:additional_sim_results}

\begin{figure}[H]
    \centering
    \includegraphics[width=1\linewidth]{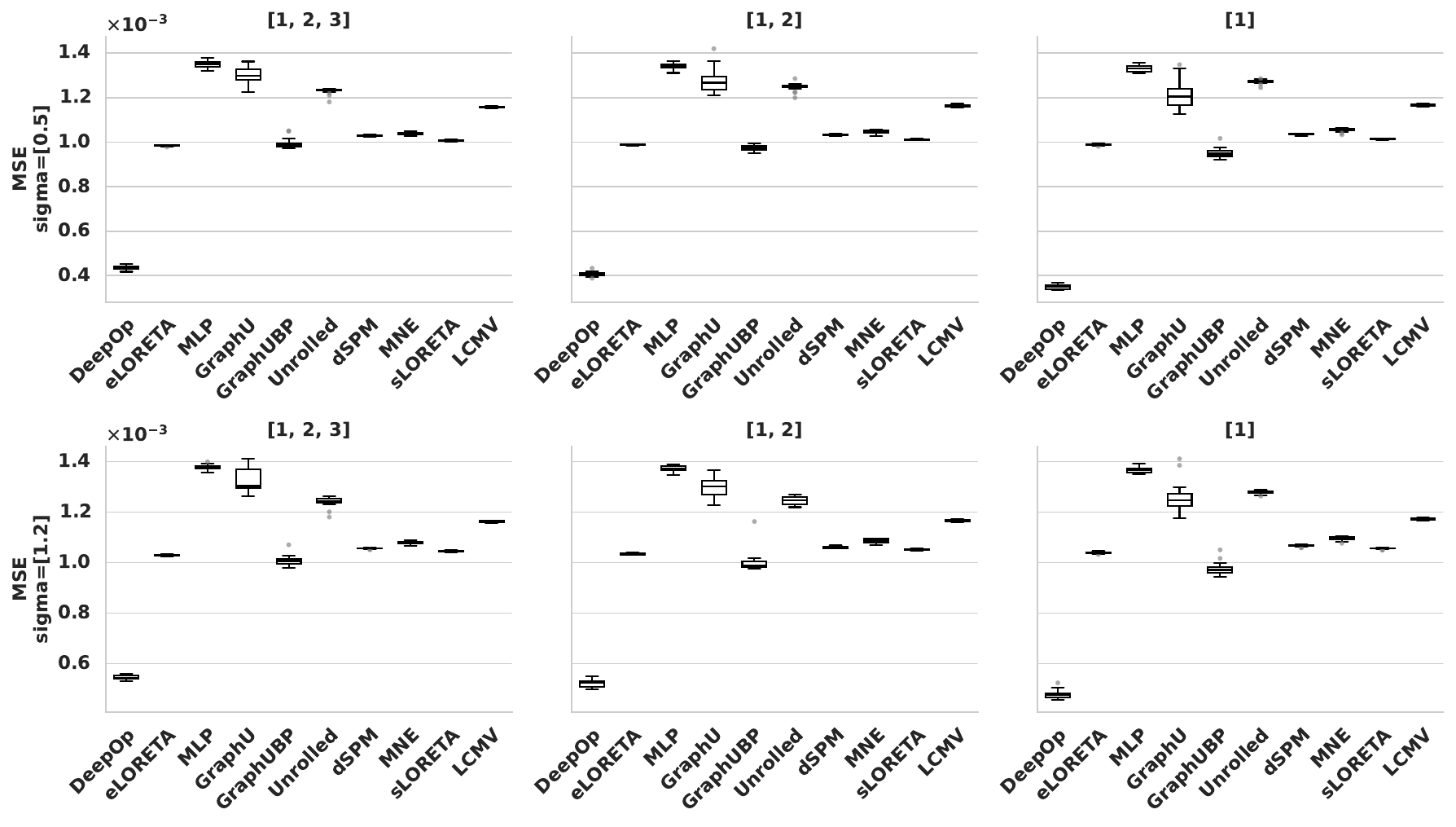}
    \caption{Boxplots of the \textit{average} test reconstruction error for the different reconstruction methods across 20 Monte Carlo experiments in the multi-subject setting. In each experiment, 200 samples were used for training and 1,000 for evaluation. The source data were generated on a reference cortical surface represented by a triangular mesh subsampled to 1,284 vertices, whose vertices served as dipoles oriented normal to the surface. Source and sensor signals were generated as described in equation~(\ref{eq:multi_subj_sim}), using for each signal a forward model randomly selected from those of 13 subjects for the training data and 6 different subjects for the test data. We used 204 gradiometers as sensors. Results are shown for two SNR levels, corresponding to Gaussian noise with standard deviations of 0.5 and 1.2. Performance was evaluated for signals with a single active source and for mixtures with $\{1,2\}$, and $\{1,2,3\}$ active sources. DeepOp-Informed outperformed all baseline methods.}
    \label{fig:realistic_subject_specific}
\end{figure}

\begin{figure}[H]
    \centering
    \includegraphics[width=1\linewidth]{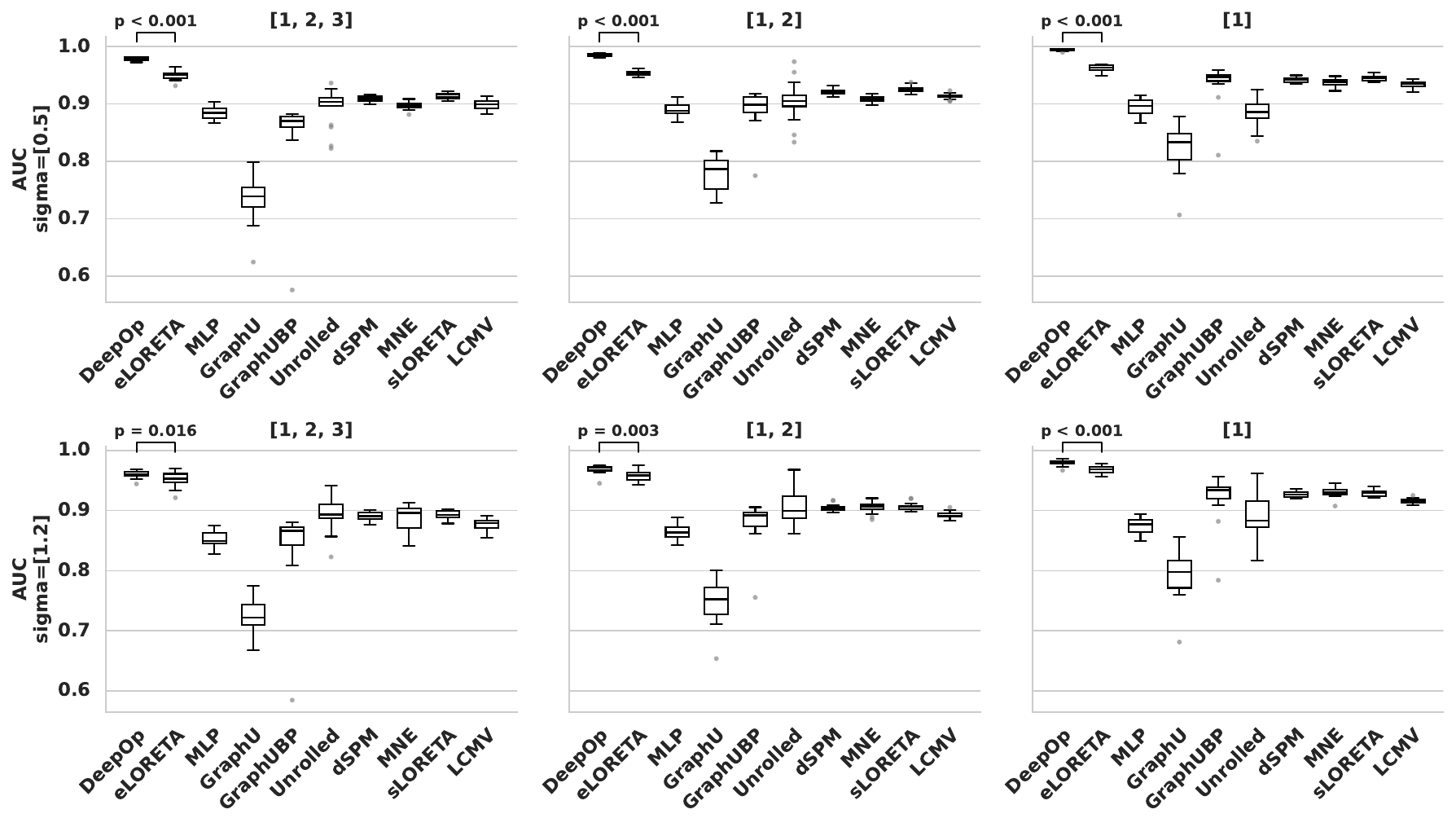}
    \caption{Boxplots of the \textit{average} test area under the receiver operating characteristic curve (AUC) achieved by the different reconstruction methods over 20 Monte Carlo experiments in the multi-subject setting shown in Figure~\ref{fig:realistic_subject_specific}. To compute the AUC, the true source signal is normalized to have a maximum value of 1, and source locations with values greater than 0.9 are labeled as active. The corresponding reconstructed values are then used as prediction scores. The AUC therefore provides a threshold-independent measure of each method’s ability to distinguish active source regions from inactive regions across the full range of possible decision thresholds. Although the improvements achieved by DeepOp over the baseline methods are smaller than those observed for normalized MSE, they remain statistically significant. This indicates that the improvements in normalized MSE cannot be attributed solely to better support recovery of the unthresholded reconstructions.}
    \label{fig:realistic_subject_specific_AUC}
\end{figure}

\begin{figure}[H]
    \centering
    \includegraphics[width=1\linewidth]{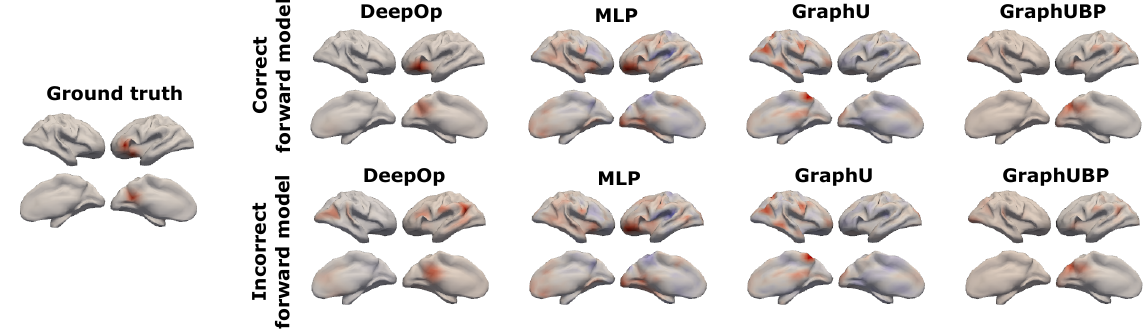}
    \caption{\textit{Incorrect forward model.} In the same setting as Figure~\ref{fig:simulation} of the main manuscript, we present example reconstructions obtained when the models were provided either with the correct forward model or with the forward model from a different subject. As expected, the reconstructions produced by MLP and GraphU are identical across the two cases, since these models do not explicitly use the forward model to generate source estimates. In contrast, the DeepOp-Informed reconstruction changes more substantially, showing that the model actively leverages subject-specific information contained in the forward model. GraphUBP, although it depends on the subject-specific forward model through the back-projected signal $\tilde{\mathcal K}_i^T y_i$, shows only minor differences between the two cases. This suggests that it may be less effective at identifying and exploiting subject-specific aspects of the forward model to improve reconstruction precision.}
    \label{fig:Multi-subj-wrong}
\end{figure}

\begin{figure}[H]
    \centering
    \includegraphics[width=1\linewidth]{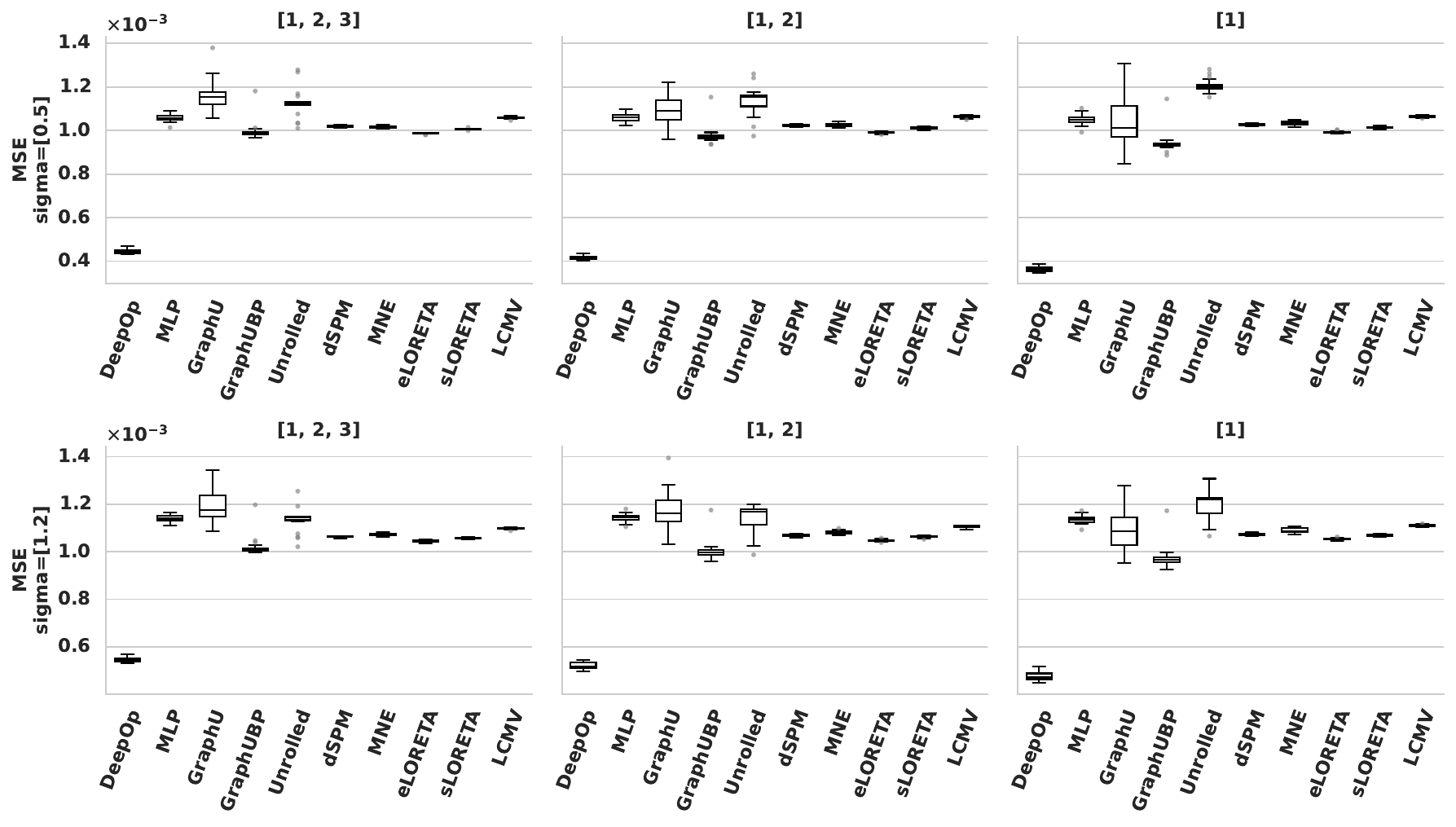}
    \caption{\textit{Realistic simulations, single-subject.} Boxplots of the \textit{average} test reconstruction error for the different reconstruction methods across 20 Monte Carlo experiments. In each experiment, 200 samples were used for training and 1,000 for evaluation. The source data were generated on the subject's cortical surface represented by a triangular mesh subsampled to 1,284 vertices, whose vertices served as dipoles oriented normal to the surface. Source and sensor signals were generated as described in equation (\ref{eq:single_subj_sim}) using a fixed forward model. We used 204 gradiometers as sensors. Results are shown for two SNR levels, corresponding to Gaussian noise with standard deviations of 0.5 and 1.2. Performance was evaluated for signals with a single active source and for mixtures with $\{1,2\}$ or $\{1,2,3\}$ active sources. DeepOp-Informed outperformed all baseline methods.}
    \label{fig:realistic_single_subject}
\end{figure}

\begin{figure}[!htb]%
\centering
\includegraphics[width = 0.25\textwidth]{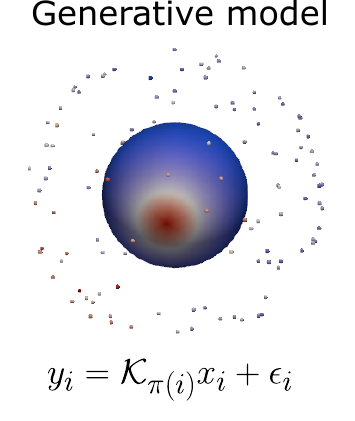}
\caption{\textit{Spherical source simulation setting.} Schematic of the data generation process. Source signals on the sphere were generated as a random number of equal-intensity localized sources centered at randomly selected vertices on the source mesh and then smoothed on the mesh to produce spatially localized but more diffuse peaks. The resulting signals were mapped through a realistic MEG forward model to 128 sensor locations outside the sphere, and i.i.d. noise was added to the simulated sensor data. The task is to reconstruct the source signals $x_i$ from the noise-contaminated measurements $y_i$.}
\label{fig:sphere-simulations}
\end{figure}%

\begin{figure}[H]
    \centering
    \includegraphics[width=0.65\linewidth]{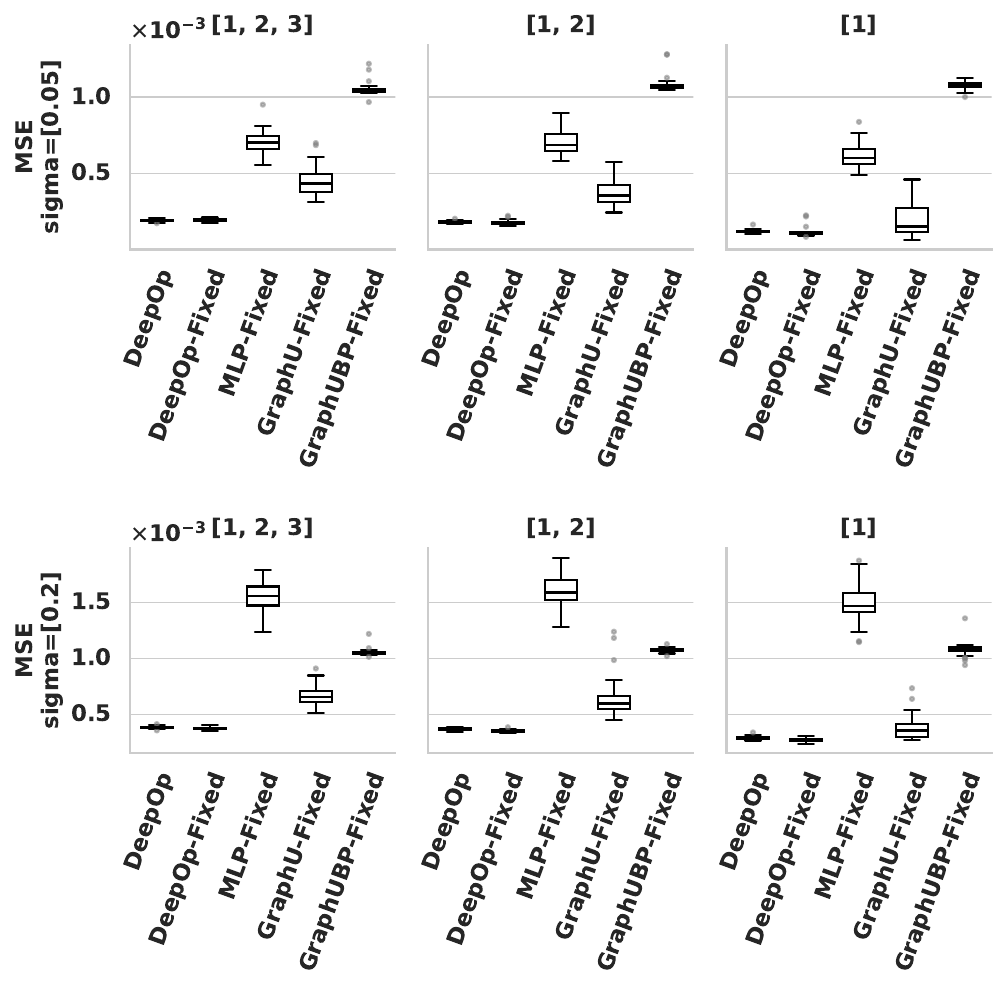}
    \caption{\textit{Spherical source, subject-specific setting.} Boxplots of the average test reconstruction error of the multi-subject version of DeepOp-Informed (DeepOp) over 50 Monte Carlo experiments in the setting described in Section~\ref{sec:additional_simulations}. In each experiment, 200 samples were used for training and 1,000 for evaluation. Training and test data were generated using different sets of forward models. Results are compared with MLP, GraphU, GraphUBP, and DeepOp-Informed trained and tested under a \textit{single-subject simulation setting}, that is, without subject-specific variation in the forward model. The comparison shows that the multi-subject model nearly matches the performance of the single-subject DeepOp-Informed model, despite the added challenge of generalizing across subjects with subject-specific forward models. Results are shown for two SNR levels, corresponding to Gaussian noise with standard deviations of 0.05 and 0.2. Performance was evaluated for signals with a single active source and for mixtures with $\{1,2\}$, and $\{1,2,3\}$ active sources.}
    \label{fig:sphere-subject-specific}
\end{figure}

\begin{figure}[H]
    \centering
    \includegraphics[width=0.7\linewidth]{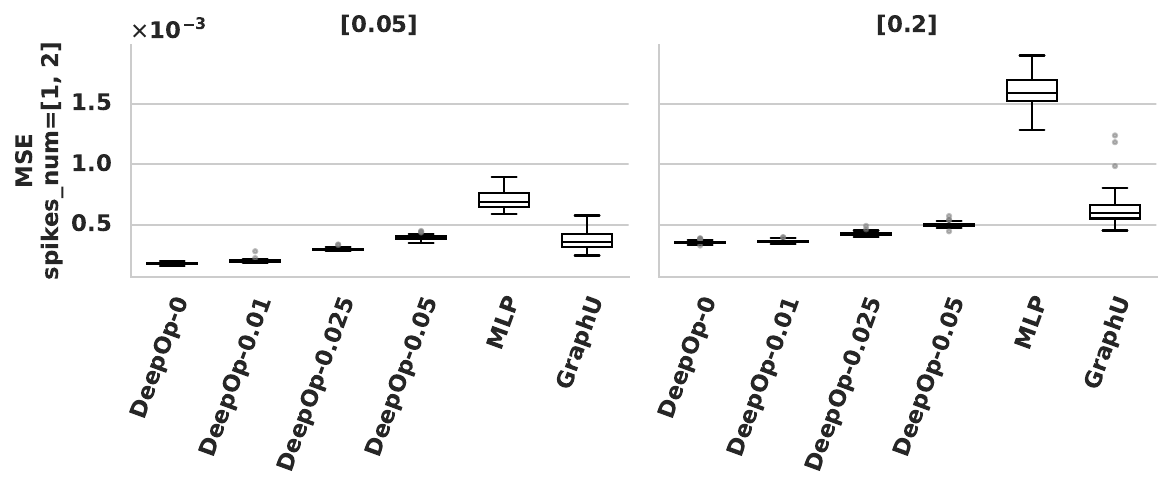}
    \caption{\textit{Spherical source, forward model perturbation.} Boxplots showing the average test reconstruction error of DeepOp-Informed over 50 Monte Carlo experiments, when the model is informed with a perturbed version $\tilde{\mathcal{K}}$ of the forward model $\mathcal{K}$ used for data generation. In each experiment, 200 samples were used for training and 1,000 for evaluation. Perturbations were applied independently to each entry of the forward model $\mathcal{K}$ used for data generation and were drawn from a uniform distribution $\mathrm{Unif}(0, c \, k_{\max})$, where $k_{\max}$ is the largest element of the forward model $\mathcal{K}$, and $c$ is set to 0 (no perturbation), 0.01, 0.025, and 0.05. The matrix $\tilde{\mathcal{K}}$ was then used to inform the proposed model. Here, we ran experiments only for source signals consisting of mixtures of 1 or 2 active sources. DeepOp-Informed is able to leverage the forward model information even when the provided model is perturbed within a reasonable range. However, under higher SNRs, its performance degrades more rapidly with increasing perturbation magnitude compared to the low SNR setting.}
    \label{fig:sphere-perturb}
\end{figure}

\begin{figure}[H]
    \centering
    \includegraphics[width=0.9\linewidth]{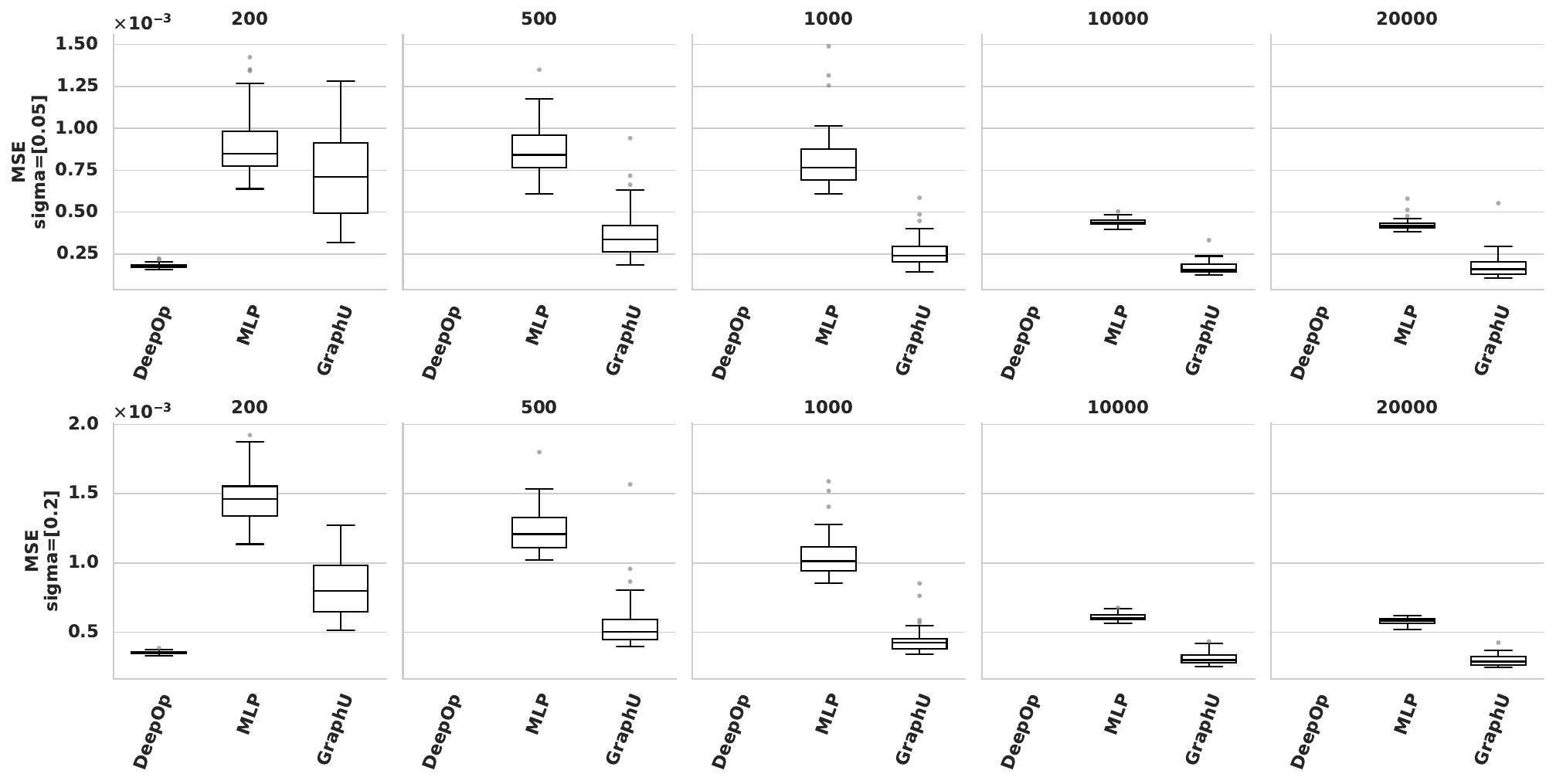}
    \caption{\textit{Spherical source, sample size.} Boxplots showing the test reconstruction error of MLP and GraphU over 50 Monte Carlo experiments for increasing sample sizes under the basic setting. As expected, the performance of MLP and GraphU improves with increasing sample size. However, MLP does not match the performance of DeepOp-Informed even with 20,000 training samples. GraphU requires more than 5,000 samples in the low-SNR setting and more than 10,000 in the high-SNR setting to reach the performance of DeepOp-Informed trained on only 200 samples.}
    \label{fig:sphere-sample-size}
\end{figure}

\begin{figure}[H]
    \centering
    \includegraphics[width=0.65\linewidth]{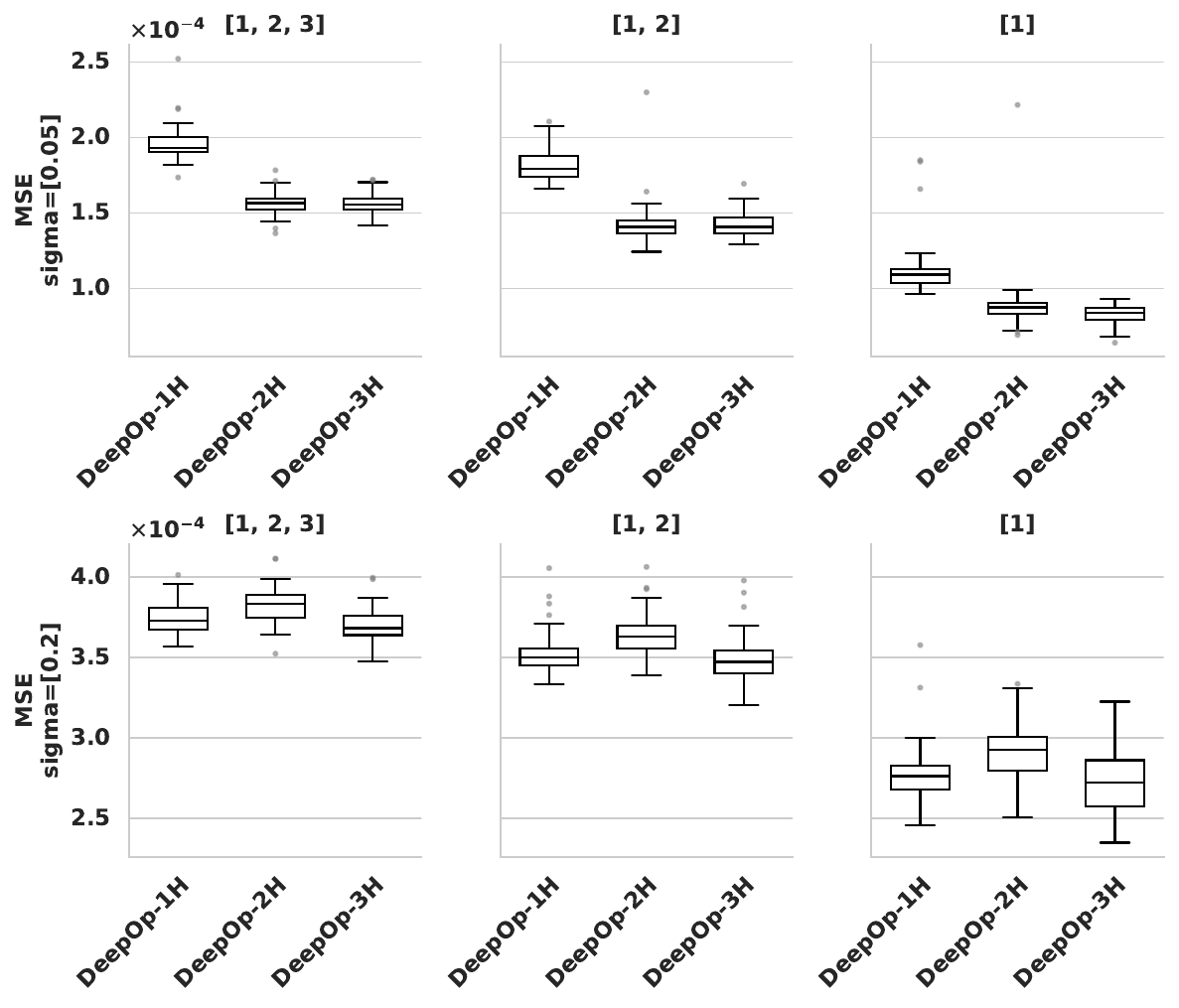}
    \caption{\textit{Spherical source space, multi-head extension.} Boxplots showing the test reconstruction error of DeepOp-Informed ({DeepOp-1H}) and its two- and three-head extensions ({DeepOp-2H} and {DeepOp-3H}) over 50 Monte Carlo experiments in the spherical source-space setting with a fixed forward model (Section~\ref{sec:extensions}). In the higher-SNR setting, the multi-head architecture appears to improve performance. In the lower-SNR setting, the results are more mixed. The number of heads should therefore be treated as a tuning parameter.
}
    \label{fig:sphere-multi-head}
\end{figure}

\begin{figure}[H]
    \centering
    \includegraphics[width=1\linewidth]{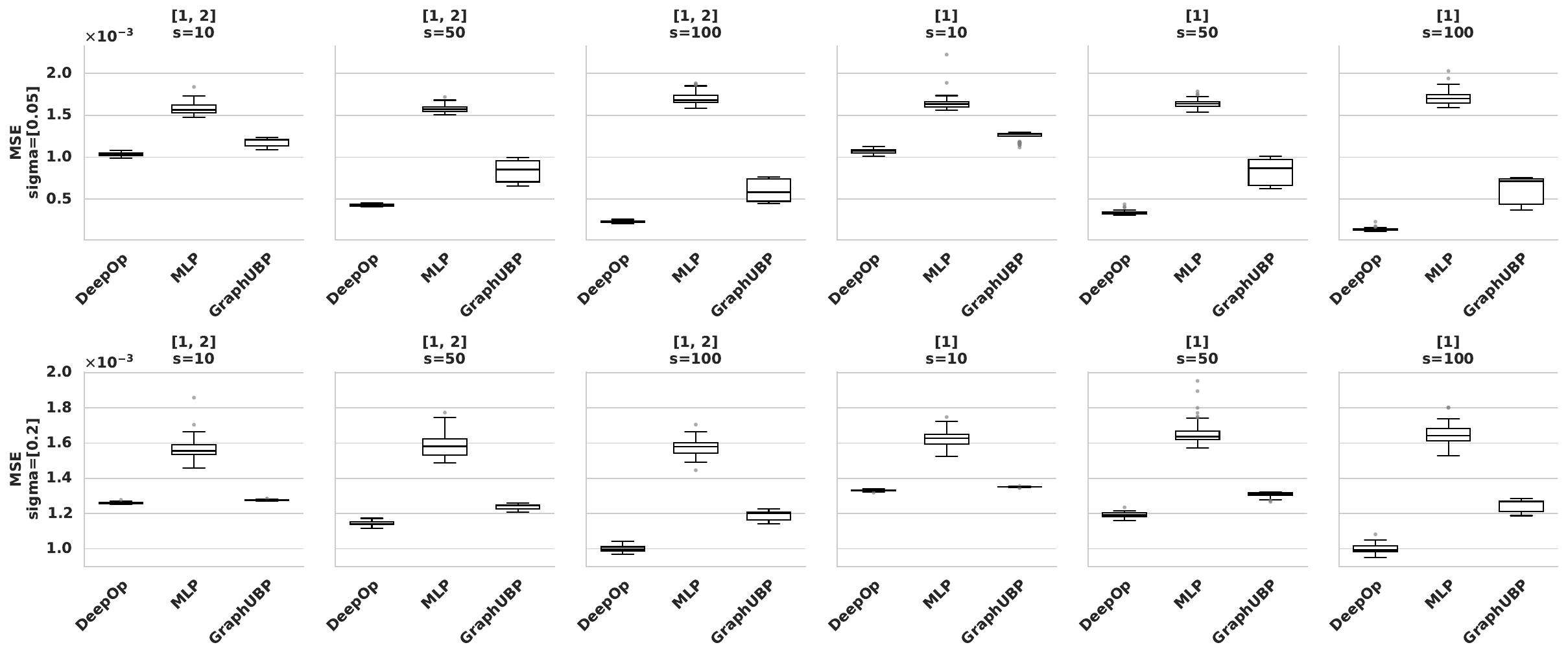}
    \caption{\textit{Spherical source space, direct measurements.} Boxplots showing the average test reconstruction error in the sample-specific setting, where the signal is measured directly in source space but only at randomly sampled locations, over 50 Monte Carlo experiments. In each experiment, 200 samples were used for training and 1,000 for evaluation. Here, the ``forward models'' $(\mathcal{K}_i)$ are $s \times 642$ matrices, with $s = 10, 50, 100$ across different simulations. Each $\mathcal{K}_i$ was constructed to have all entries equal to zero except for a single randomly chosen entry in each row, which is set to 1. This emulates the process of sampling the source signal at $s$ random subject-specific locations, with only these sampled values available for reconstruction. The training and test sets were generated using different sets of forward models. Results are compared with those of MLP and GraphUBP trained on the same datasets. Results are shown for two SNR levels, corresponding to Gaussian noise with standard deviations of 0.05 and 0.2. Performance was evaluated for signals with one peak and for mixtures of one and two peaks. DeepOp-Informed outperformed the baseline method GraphUBP (that is, GraphU applied to back-projected data). The MLP results are included for completeness, but this model cannot perform better than chance because its input variables are permuted unpredictably from sample to sample.
    }
    \label{fig:sphere-missing-data}
\end{figure}

\section{Evaluation on MEG recordings: Additional results}\label{sec:additional_results}

\begin{figure}[H]
    \centering
    \includegraphics[width=0.4\linewidth]{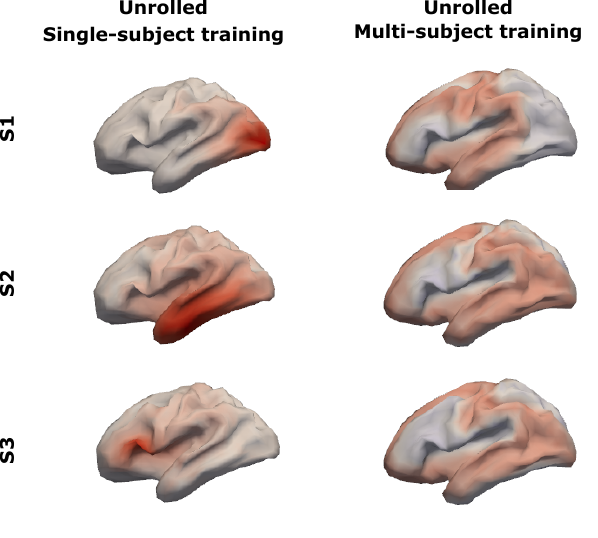}
    \caption{\textit{Source reconstruction unrolled network.} Source reconstructions of evoked responses to the first (S1), second (S2), and third (S3) syllables in the auditory experiment, obtained using the unrolled neural-network architecture. The reconstructions deviated more markedly from the expected patterns, suggesting that further architectural modifications may be needed to adapt this model to the MEG setting.}
    \label{fig:Application_unrolled}
\end{figure}

\begin{figure}[H]
    \centering
    \includegraphics[width=0.6\linewidth]{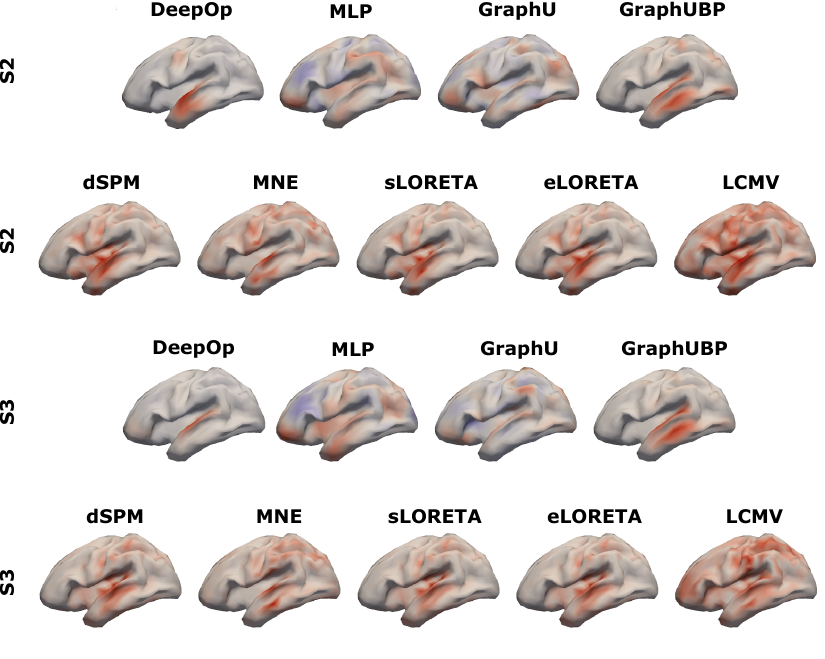}
    \caption{\textit{Stability analysis.} Source reconstructions of the evoked responses to the second (S2) and third (S3) syllables in the auditory experiment, which can be viewed as perturbations of the evoked response to the first syllable (Figure~\ref{fig:application}). DeepOp-Informed provides robust source reconstructions, while also producing sharper source estimates than the other methods.}
    \label{fig:Application_sens}
\end{figure}

\section{Proofs}\label{sec:proofs}

\begin{proof}[Proof of Proposition 1]
The problem   
\begin{equation}
o_i  = \mathcal{B}_{\theta_1, \tilde{\mathcal K}}(y_i) = \argmin_{x \in \R^p} \left\| y_i - \tilde{\mathcal{K}} x \right\|_2^2 + \left\| \mathcal{P}^{\frac{1}{2}}_{\theta_1} x \right\|^2_2,
\end{equation}
can be equivalently written as $o_i = (\tilde{\mathcal{K}}^T \tilde{\mathcal{K}} + \mathcal{P}_{\theta_1})^{-1} \tilde{\mathcal{K}}^T y_i$, assuming $\tilde{\mathcal{K}}^T \tilde{\mathcal{K}} + \mathcal{P}_{\theta_1}$ is invertible. From equation (\ref{eq:forward}), by partitioning $\tilde{o}_i=\begin{bmatrix} u_i\\ o_i\end{bmatrix}$, using the top block to get $u_i=-\tilde{\mathcal K}o_i$, and substituting into the bottom block, we obtain $(\tilde{\mathcal K}^\top\tilde{\mathcal K}+\mathcal P_{\theta_1})o_i=\tilde{\mathcal{K}}^T y_i$, hence $o_i=(\tilde{\mathcal K}^\top\tilde{\mathcal K}+\mathcal P_{\theta_1})^{-1} \tilde{\mathcal{K}}^T y_i$.
\end{proof}

\begin{proof}[Proof of Proposition 2]
Recall that the output of the biophysics-informed layer, $o_i$, is computed in the forward pass from its input $y_i$ via $o_i = \mathcal{B}_{\theta_1, \tilde{\mathcal{K}}} (y_i)$. Let $\frac{\partial \ell}{\partial o_i}$ denote the gradient of the objective with respect to $o_i$, as obtained via backpropagation from the subsequent layers of the neural network. We write the $j$-th component of $\bar y_i$ as $\bar y_{ij}$, and abbreviate $\mathcal{B}_{\theta_1, \tilde{\mathcal{K}}}$ by $\mathcal{B}$ and $\mathcal{P}_{\theta_1}$ by $\mathcal{P}$. Moreover, let $\mathcal S = \tilde{\mathcal{K}}^T \tilde{\mathcal{K}} + \mathcal{P}$ and $\bar{y}_i = \tilde{\mathcal{K}}^T y_i$, and notice that $o_i = \mathcal{B} (y_i) = \mathcal{S}^{-1}\bar{y}_i$.

Then, we have:
\begin{align*}
    \left(\frac{\partial \ell}{\partial \bar{y}_i} \right)_j &= \sum_r \frac{\partial \ell}{\partial o_{ir}} \frac{\partial o_{ir}}{\partial \bar{y}_{ij}}\\
    &= \sum_r\frac{\partial \ell}{\partial o_{ir}} \frac{\partial}{\partial \bar{y}_{ij}} \left( \sum_k \mathcal{S}_{rk}^{-1} \bar{y}_{ik} \right)\\
    &= \sum_r \frac{\partial \ell}{\partial o_{ir}} \left( \sum_k \mathcal{S}_{rk}^{-1}  \frac{\partial \bar{y}_{ik}}{\partial \bar{y}_{ij}} \right)\\
    &= \sum_r \frac{\partial \ell}{\partial o_{ir}} \left( \sum_k \mathcal{S}_{rk}^{-1}  \delta_{kj}\right)\\
    &= \sum_r \frac{\partial \ell}{\partial o_{ir}} \mathcal{S}_{rj}^{-1}.
\end{align*}
Hence,
\[
\frac{\partial \ell}{\partial \bar{y}_i} = (\mathcal{S}^{-1})^T  \frac{\partial \ell}{\partial o_i},
\]
which, together with the symmetry of $\mathcal{S}$, implies that $\frac{\partial \ell}{\partial \bar{y}_i}$ is given by the solution to
\begin{align*}
\begin{cases}
\text{Find $\frac{\partial \ell}{\partial \bar{y}_i}$ such that}\\ \mathcal S \frac{\partial \ell}{\partial \bar{y}_i} = \frac{\partial \ell}{\partial o_i}.
\end{cases}
\end{align*}
It is straightforward to show that this is equivalent to the sparse linear system in equation (\ref{eq:forward}) using $\frac{\partial \ell}{\partial o_i}$ in place of $\tilde{\mathcal K}^T y_i$.

 Next, we compute the gradient of the objective with respect to $\calP$:
\begin{align*}
\left( \frac{\partial \ell}{\partial \calP} \right)_{mn}
&= \sum_{i,r} \frac{\partial \ell}{\partial o_{ir}} \frac{\partial o_{ir}}{\partial \calP_{mn}}\\
&= \sum_{i,r} \frac{\partial \ell}{\partial o_{ir}} \frac{\partial}{\partial \calP_{mn}} (\sum_{j} \mathcal{S}_{rj}^{-1} \bar{y}_{ij})\\
&=-\sum_{i,r}\frac{\partial \ell}{\partial o_{ir}} \sum_{j} \mathcal{S}_{rj}^{-1} \sum_{k} \frac{\partial \mathcal{S}_{jk}}{\partial \mathcal{P}_{mn}} (\sum_{l} \mathcal{S}_{kl}^{-1} \bar{y}_{il})\\
&=-\sum_{i,r}\frac{\partial \ell}{\partial o_{ir}} \sum_{j} \mathcal{S}_{rj}^{-1} \sum_{k} \delta_{jm}\delta_{kn} (\sum_l \mathcal{S}_{kl}^{-1} \bar{y}_{il})\\
&=-\sum_{i,r} \frac{\partial \ell}{\partial o_{ir}}  \mathcal{S}_{rm}^{-1} (\sum_l \mathcal{S}_{nl}^{-1} \bar{y}_{il}).
\end{align*}
Therefore,
\begin{align}
\frac{\partial \ell}{\partial \mathcal{P}} 
&= -\sum_i \left((\mathcal{S}^{-1})^T \frac{\partial \ell}{\partial o_i} \right) \otimes (\mathcal{S}^{-1} \bar{y}_i)\\
&= -\sum_i \frac{\partial \ell}{\partial \bar{y}_i} \otimes o_i.
\end{align}
\end{proof}

\end{document}